\let\hat\widehat
\let\tilde\widetilde
\theoremstyle{plain}
\def\##1\#{\begin{align}#1\end{align}}
\def\$#1\${\begin{align*}#1\end{align*}}
\tikzset{brace/.style={decorate, decoration={brace}},
  brace mirrored/.style={decorate, decoration={brace,mirror}},
}
\def\keywords{\vspace{.5em}
{\textit{Keywords}:\,\relax%
}}
\newtheorem{principle}{Principle}
\newtheorem{abc}{Rule}
\newcommand{\mcs}{\ensuremath{\mathcal{S}}}
\newcommand{\mck}{\ensuremath{\mathcal{K}}}
\newcommand{\util}{\ensuremath{\mathbf{U}}}
\newcommand{\pureutil}{\ensuremath{\mathbf{u}}}
\newcommand{\attention}{\ensuremath{\mathbf{h}}}
\newcommand{\ranker}{\ensuremath{\mathbf{R}}}
\newcommand{\finfo}{\ensuremath{\mathcal{F}}}
\newcommand{\einfo}{\ensuremath{\mathcal{E}}}
\newcommand{\spotassigner}{\ensuremath{\mathbf{J}}}
\newcommand{\spotset}{\ensuremath{\mathcal{L}}}
\newcommand{\prodset}{\ensuremath{\mathcal{G}}} 
\newcommand{\spotpotcands}{\ensuremath{\mathbf{C}}} 
\newcommand{\agg}{\ensuremath{\text{agg}}} 
\newcommand{\randgen}{\ensuremath{\mathbf{B}}}
\newcommand{\amax}{\ensuremath{\text{argmax}}}
\newcommand{\define}{\ensuremath{\stackrel{\Delta}{=}}}
\newcommand{\disteq}{\ensuremath{\stackrel{\text{dist}}{=}}}
\newcommand{\dist}{\ensuremath{\text{dist}}}
\newcommand{\prob}[2][]{\ensuremath{\mathbb{P}_{#1}\left( #2 \right)}}
\newcommand{\expected}[2][]{\ensuremath{\mathbb{E}_{#1}\left[ #2  \right]}}
\newcommand{\indd}[2][]{\ensuremath{\mathbf{I}_{#1}\left\{ #2 \right\}}}
\newcommand{\ixj}{\ensuremath{I(x,j)}}
\newcommand{\ixjp}{\ensuremath{I(x,j+1)}}
\newcommand{\iyj}{\ensuremath{I(y,j)}}
\newcommand{\iyjp}{\ensuremath{I(y,j+1)}}
\newcommand{\convcore}[2]{\ensuremath{\pi}^{#1}_{#2}} 
\newcommand{\mass}[1]{\ensuremath{\delta\left\{ #1 \right\}}}
\newcommand{\sizeof}[1]{\ensuremath{\lvert#1\rvert}}
\newcommand{\unicorn}{UniCoRn}
\DeclarePairedDelimiterX{\expectarg}[1]{[}{]}{%
  \ifnum\currentgrouptype=16 \else\begingroup\fi
  \activatebar#1
  \ifnum\currentgrouptype=16 \else\endgroup\fi
}
\newcommand{\innermid}{\nonscript\;\delimsize\vert\nonscript\;}
\newcommand{\activatebar}{%
  \begingroup\lccode`\~=`\|
  \lowercase{\endgroup\let~}\innermid 
  \mathcode`|=\string"8000
}
\title{\LARGE Producer-Side Experiments Based on Counterfactual Interleaving Designs for Online Recommender Systems}
\author{Yan Wang}
\author{Shan Ba}
\affil{LinkedIn Corporation
}
\begin{document}

\maketitle

\begin{abstract}

Recommender systems have become an integral part of online platforms, providing personalized recommendations for purchases, content consumption, and interpersonal connections.  These systems consist of two sides: the producer side comprises product sellers, content creators, or service providers, etc., and the consumer side includes buyers, viewers, or customers, etc. 
To optimize online recommender systems, A/B tests serve as the golden standard for comparing different ranking models and evaluating their impact on both the consumers and producers. 
While consumer-side experiments is relatively straightforward to design and commonly employed to assess the impact of ranking changes on the behavior of consumers (buyers, viewers, etc.), designing producer-side experiments for an online recommender/ranking system is notably more intricate because producer items in the treatment and control groups need to be ranked by different models and then merged into a unified ranking to be presented to each consumer. 
Current design solutions in the literature are ad hoc and lacking rigorous guiding principles.    
In this paper, we examine limitations of these existing methods and propose the principle of consistency and principle of monotonicity for designing producer-side experiments of online recommender systems. Building upon these principles, we also present a systematic solution based on counterfactual interleaving designs to accurately measure the impacts of ranking changes on the producers (sellers, creators, etc.).

\end{abstract}

\keywords{Two-sided marketplace, Creator-side experiment, Supply-side experiment, SUTVA violation, Ranking optimization, Attention function, A/B test}

\section{Introduction}

Recommender systems are ubiquitous in online platforms such as Amazon, Facebook, LinkedIn and Airbnb for suggesting items to buy, contents to view, people to connect, rooms to book, etc. A recommender system has two sides: a \emph{producer side} (e.g., sellers in the marketplace, content creators in feeds, service-providers such as hosts in Airbnb, etc.) and a \emph{consumer side} (e.g., buyers, content viewers, customers, etc.). For each consumer, the recommender system uses machine learning models to rank a set of ``producer items'' (e.g., products from sellers, contents from creators, rooms listed by hosts, etc.) and fills them into pre-designated spots in the user interface of an app or webpage (Figure~\ref{fig_recommender}). The aim of the recommender system is to predict the preference of each consumer and allocate more preferable producer items into the spots where the consumer would pay more attention to.  

\begin{figure}[h]
  \centering
  \includegraphics[width=100mm]{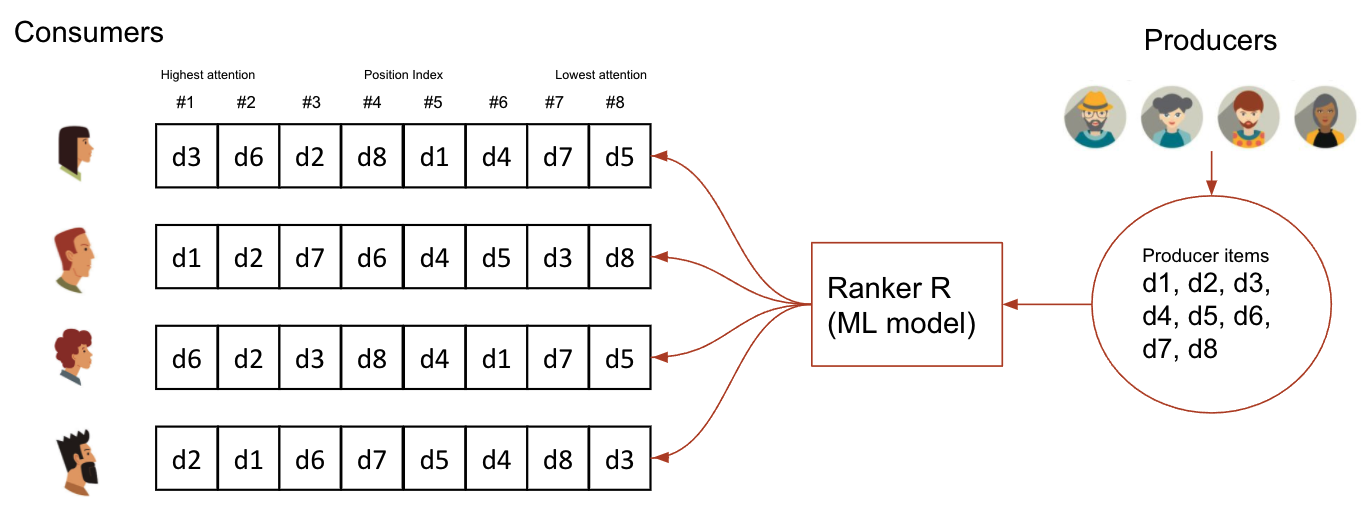}
  \caption{An illustration of the online recommender system.}\label{fig_recommender}
\end{figure}

To optimize an online recommender system, the A/B test (a.k.a. online controlled experiment) ~\citep{xu_infrastructure_2015, tang_overlapping_nodate, bakshy_designing_2014, kohavi_controlled_2009, Kohavi_online_2013, kohavi_tang_xu_2020} is the golden standard for comparing different ranking models and measuring how ranking changes impact the behaviors of consumers/producers. 
In most online platforms, any new ranking model needs to be thoroughly evaluated in online experiments before it can get fully deployed. There are generally two different types of online experiments involved for a recommender system.

\emph{Consumer-side experiments} measure how ranking changes in a recommender impact the behavior of consumers, which are relatively easy to design and widely used in practice. The standard approach is to randomly split all consumers into the control and treatment groups, where each group is associated with a different ranking model (ranker). For each consumer, the recommender ranks \emph{all available producer items} using the variant of ranking model assigned to her group (Figure~\ref{fig_consumer_side}), and then consumer-oriented metrics of the two groups are compared to conclude which ranking model works better for consumers. 

\begin{figure}[h]
  \centering
  \includegraphics[width=100mm]{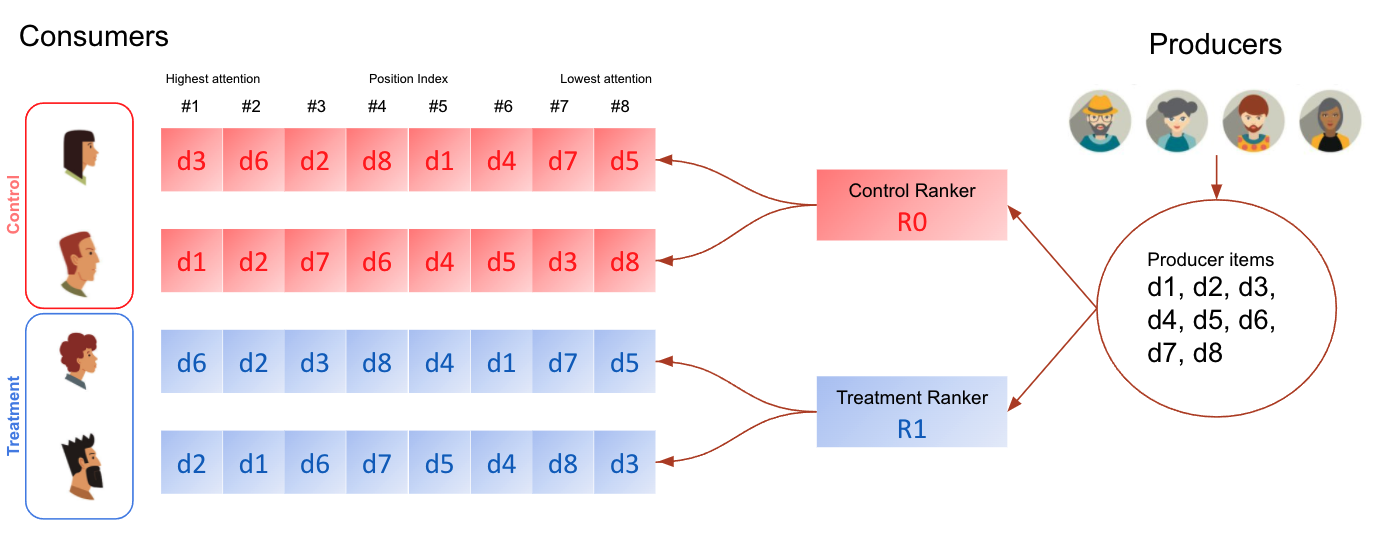}
  \caption{Consumer-side experiment of an online recommender system}\label{fig_consumer_side}
\end{figure}

In addition to the consumer side, it is important to also measure the effects of ranking changes on the producer side, 
because a recommender system needs to be optimized based on objectives derived from both sides.  
For example, purely optimizing the rankings toward buyer's satisfactions in an online marketplace may result in directing most of the traffic to a small portion of top sellers and causing the other sellers (e.g. new sellers) to churn.
Similarly, when improving the ranking model for news feed, we not only need to gauge its impacts on the content viewers, but also need to consider how it would change the behaviors of content creators. 
Nevertheless, despite the importance of measuring producer-side impacts, it is challenging to design \emph{producer-side experiments} for an online recommender system. 
 
\begin{figure}[h]
  \centering
  \includegraphics[width=100mm]{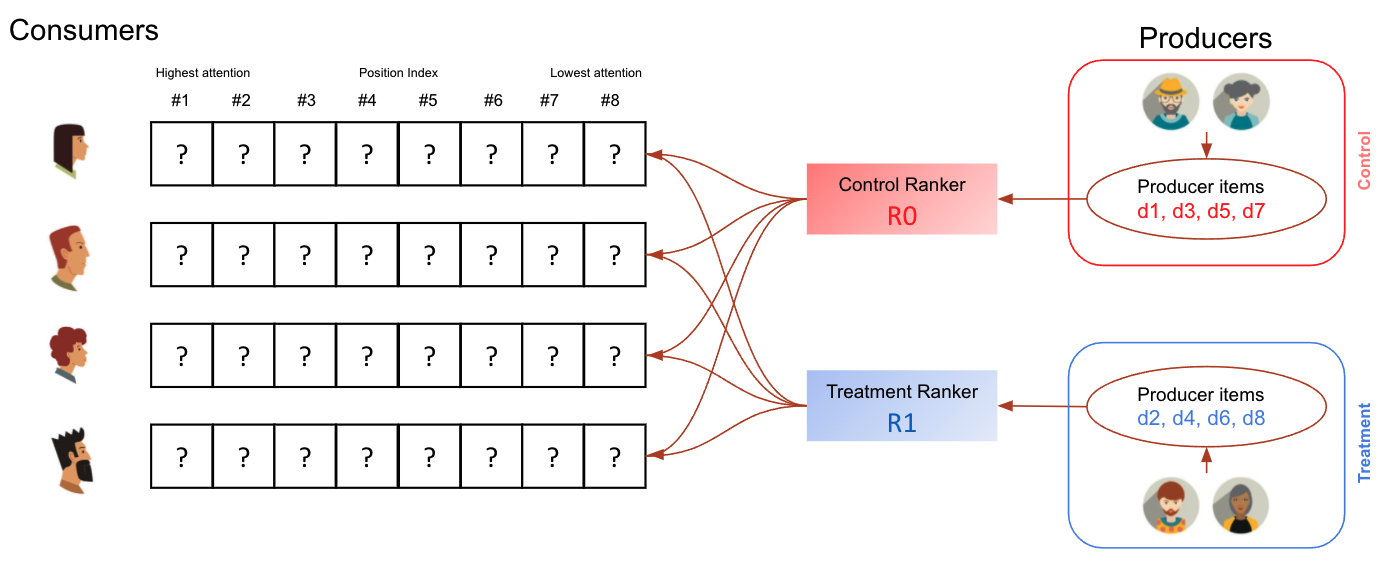}
  \caption{Producer-side experiment of an online recommender system}\label{fig_producer_side}
\end{figure}

The producer-side experiment requires randomly splitting all producers into the control and treatment groups, where producer items from each group are ranked by a different model (Figure~\ref{fig_producer_side}). Because a consumer can only see a single ranked list of producer items each time, designing the producer-side experiment is challenging in that it needs to blend  rankings of producer items from the treatment and control groups together for each consumer. Several design approaches for producer-side experiments have been developed in the literature, but they are ad hoc and suffer from various issues that could lead to biased experiment readouts. 
This paper aims to address this challenge by proposing rigorous design principles that any producer-side experiment should follow in order to accurately measure the effects of ranking changes.
The rest of this paper will be organized as follows. In Section~\ref{sec:wrong_approaches}, we review existing producer-side experiment design solutions and discuss their limitations and biases. Section~\ref{sec:notation} introduces some basic concepts and notations, and in Section~\ref{sec:math_framework}, we propose two general principles for designing producer-side experiments where SUTVA is often violated. 
Building on these principles, Section~\ref{sec:new_principles} derives a rigorous solution based on counterfactual interleaving designs which can ensure an unbiased
comparison between the treatment and control rankers. In Section~\ref{sec:exm}, we provide examples to illustrate the proposed solution, and some final conclusion remarks are given in Section~\ref{sec:conclusions}.

\section{Existing Methods for Designing Producer-Side Experiments}\label{sec:wrong_approaches}

In this section, we provide an overview of the existing methods for designing producer-side experiments and discuss their issues. 
To facilitate the discussion, we will use an illustration example where a total of eight producer items $d_1,\ldots, d_8$ are randomly split into the control group $\prodset_0 = \{d_1, d_3, d_5, d_7\} $ and the treatment group $\prodset_1=\{d_2, d_4, d_6, d_8\}$. The problem of designing producer-side experiments is how to blend the rankings of producer items based on the treatment and control rankers for each consumer as shown in Figure~\ref{fig_producer_side}.

\subsection{Double Randomization}\label{sec:hide}

A straightforward solution is to use double randomization and only show either the treatment or control group of producer items to each consumer. This design requires further splitting consumers into the control and treatment groups where the consumer-side randomization is independent from that at the producer side. If a consumer is in control, the recommender would only show her producer items from the control group $\prodset_0 = \{d_1, d_3, d_5, d_7\} $ that are ranked by the control model. Similarly, if a consumer is in treatment, the recommender would only show her producer items from the treatment group $\prodset_1=\{d_2, d_4, d_6, d_8\}$ that are ranked by the treatment model. See Figure~\ref{fig_hide_items_from_the_other_group} for an illustration. 

Drawbacks of this approach are obvious: the consumer cannot see any producer items from the other group, and each producer item can only be shown to a subset of consumers. These constraints not only lead to poor product experience during the experiment, but they also misrepresent the typical use cases on both the producer side and the consumer side. Consequently, valid conclusions cannot be drawn from such experiments. 

It is important to note that this solution is different from the two-sided randomization design \citep{Johari_2020} and the multiple randomization design \citep{Bajari_2023} in the literature which assume that the intervention can be independently assigned for each consumer-producer pair and thus are not applicable for evaluating the ranking changes in an online recommender system.

\begin{figure}[h]
\centering
\begin{minipage}{.45\textwidth}
  \centering
  \includegraphics[width=.8\linewidth]{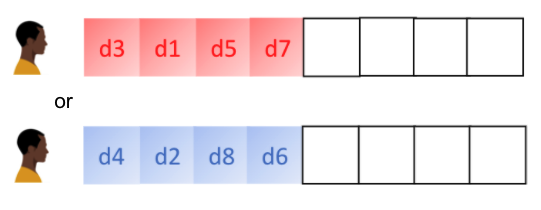}
  \caption{Double randomization and only showing one group of producer items to each consumer. (Top: if the consumer is in control; Bottom: if the consumer is in treatment.)}
  \label{fig_hide_items_from_the_other_group}
\end{minipage} 
\hspace*{0.05cm} 
\begin{minipage}{.45\textwidth}
  \centering
  \includegraphics[width=.8\linewidth]{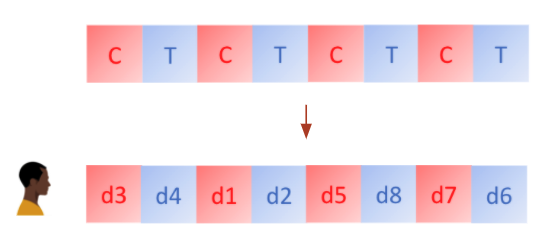}
  \caption{Random spot labeling. (For one consumer, producer items in control are ranked as $[d_3, d_1, d_5, d_7]$ and placed among the control spots, while producer items in treatment are ranked as $[d_4, d_2, d_8, d_6]$ and placed among the treatment spots)}
  \label{fig_random_spot_allocation}
\end{minipage}
\end{figure}

\subsection{Random Spot Labeling} \label{sec:random_spot}
Another common design solution is illustrated in Figure~\ref{fig_random_spot_allocation}. For a consumer, the method first randomly labels each spot in her final ranking list to be either in treatment (T) or control (C). Then, producer items in control are ranked by the control model (e.g., $[d_3, d_1, d_5, d_7]$) and placed among the control spots, while producer items in treatment are ranked by the treatment model (e.g., $[d_4, d_2, d_8, d_6]$) and placed among the treatment spots. 

This design approach is essentially based on a random merger of the treatment and control rankings. It is better than the previous approach in Section~\ref{sec:hide} as all the producer items can be shown to each consumer. However, the merged ranking is still not representative of the real product experience. To see this, consider an AA test scenario where the treatment and control models are the same. In this case, we would expect the final ranking to remain the same as using either treatment or control model to rank all producer items. However, the design method in Figure~\ref{fig_random_spot_allocation} would generate very different ranking results because the design imposes an extra constraint on the treatment or control label of each spot in the final ranking. For example, suppose in the AA test, both treatment and control models would rank the eight producer items as $[d_3, d_1, d_5, d_7, d_4, d_2, d_8, d_6]$ for a consumer. Then, the correct final ranking for this consumer should just be $[d_3, d_1, d_5, d_7, d_4, d_2, d_8, d_6]$ and the corresponding treatment/control label of each spot should be $[C, C, C, C, T, T, T, T]$. The random spot labeling constraint $[C, T, C, T, C, T, C, T]$ in Figure~\ref{fig_random_spot_allocation}, on the other hand, results in an inaccurate final ranking $[d_3, d_4, d_1, d_2, d_5, d_8, d_7, d_6]$ for the consumer, which cannot reflect the real product experience for producers.

\subsection{SUTVA and Counterfactual Rankings}\label{sec:sutva}

The ``Stable Unit Treatment Value Assumption'' (SUTVA) \citep{imbens_rubin_2015} is a standard assumption in designing A/B tests which requires that the potential outcome for one unit in the experiment depends only on its own treatment status and should not be affected by the treatment assignment to the other units. For producer-side experiments, SUTVA means that producers in each treatment group should not be affected by the existence of other treatment group; instead, their behavior should be the same as if the ranking model associated with their group is applied to all of the producers. 

\cite{facebook_2020} defines the \emph{control counterfactual ranking} as the ranking of all producer items (from both the treatment and control groups) based on the control model. It represents the ranking result as if the control ranker is ramped to 100\% of the producers. 
Similarly, the \emph{treatment counterfactual ranking} is defined as using the treatment model to rank all producer items (not only the producer items in the treatment group), which represents the ranking result as if the treatment model is applied to 100\% of the site traffic. Figure~\ref{fig_identical_rankers} and Figure~\ref{fig_non_conflict_merging} give two examples of the counterfactual rankings. 
When merging the rankings of producer items from the treatment and control groups together for each consumer in Figure~\ref{fig_producer_side}, SUTVA requires that producer items from the control group should be placed in the same positions as if they were in the control counterfactual ranking while producer items from the treatment group should be placed in the same positions as if they were in the treatment counterfactual ranking. We call such a merged ranker in the producer-side experiment as the \emph{SUTVA ranker} $\ranker_*$. It has the desirable property that all producer items of each group are placed at the same positions as if the corresponding ranking model is ramped to 100\% of the site traffic.

In practice, a valid SUTVA ranker may not exist. For designing producer-side experiments, we summarize the following two basic rules for when SUTVA can be met and must be followed. First is for the AA-Test scenario that we have described at the end of Section~\ref{sec:random_spot}.   
\begin{abc}[AA-Test Scenario]
  \label{principle:identical-ranker}
   If the treatment and control rankers are identical, their merged ranker in the producer-side experiment should remain the same as the original ranker which is also the SUTVA ranker. (Figure~\ref{fig_identical_rankers})
\end{abc}
This rule is important because in practice the difference between treatment and control rankers are often small (i.e. small treatment effect). All valid design methods for producer-side experiments need to satisfy Rule~\ref{principle:identical-ranker} and correctly yield the SUTVA ranker in the AA-test scenario.

\begin{figure}[h]
\centering
\begin{minipage}{.5\textwidth}
  \centering
  \includegraphics[width=.9\linewidth]{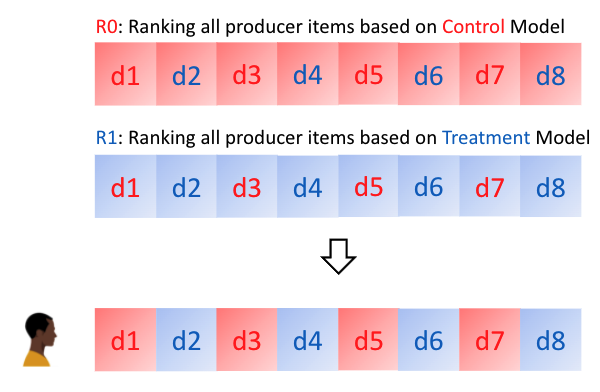}
  \caption{AA-Test Scenario}
  \label{fig_identical_rankers}
\end{minipage}%
\begin{minipage}{.5\textwidth}
  \centering
  \includegraphics[width=.9\linewidth]{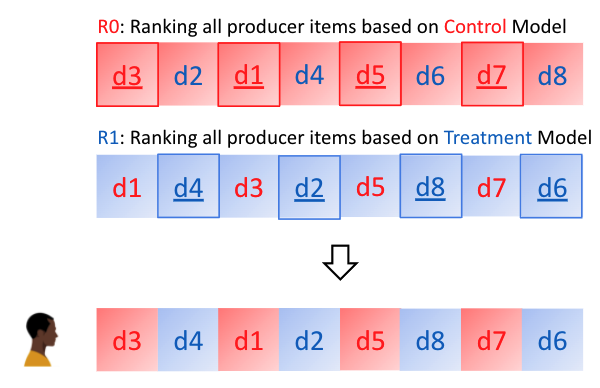}
  \caption{Non-Conflict-Merging Scenario}
  \label{fig_non_conflict_merging}
\end{minipage}
\end{figure}

Our next rule summarizes the non-conflict merging scenarios in which a valid SUTVA ranker $\ranker_*$ exists and should always be used. 
\begin{abc}[Non-Conflict-Merging Scenario]
  \label{principle:non_conflict}
   When the treatment and control counterfactual rankings have no merging conflicts, the merged ranker in the producer-side experiment should be uniquely determined by SUTVA and place every producer item into the position as if the ranking model associated with its group is applied to all producers. (Figure~\ref{fig_non_conflict_merging})
\end{abc}

When there are conflicts in merging the treatment and control counterfactual rankings (i.e., producer items from different groups both demand the same position in the merging process), SUTVA cannot be perfectly met and this is the challenging part in designing producer-side experiments. 

To facilitate the discussions in the rest of this paper, we define a ranker $\ranker$ as a ranking model which provides a ranked list of producer items to a consumer in each session  of the recommender system. 
Mathematically, $\ranker$ generates an one-to-one mapping function in each session:
\begin{equation}
 \ranker: \prodset=\left\{ d_1, d_2, \dots \right\} \longrightarrow   \spotset\define\left\{ 1, 2, \dots, \sizeof{\prodset} \right\}
\end{equation}
where $\prodset = \left\{ d_1, d_2, \dots \right\}$ denotes the set of all producer items, 
$\spotset$ is a set of ranks each of which corresponds to a spot in the consumer's user interface. Here, a smaller rank value represents a better match between the producer item and the consumer, and without loss of generality, we assume that the consumer's user interface has $\sizeof{\prodset}$ spots where spots with smaller indices tend to receive more attentions from the consumer (i.e., spots at the top of the page). The recommender system fills in these spots by matching the ranks of the producer items with the spot indices, i.e., the producer item $d$ with $\ranker(d) = i$ will be put in spot $i$.  

Let $\ranker_0$ denote the control counterfactual ranker, $\ranker_1$ denote the treatment counterfactual ranker, and then the SUTVA ranker $\ranker_*$ can be represented as
\begin{equation}
\label{equ:optimal_self_contradicting_design_R}
\ranker_*(d) = \left\{ \begin{array}{ll}
\ranker_0(d) & \textrm{if $d$ is in the control group}\\
\ranker_1(d) & \textrm{if $d$ is in the treatment group}
\end{array} \right.
\end{equation}
for all producer items $d \in \prodset$.

The SUTVA ranker $\ranker_*$ is the optimal solution for designing producer side experiment as long as it exists (no merging conflicts). However, if there exists $d, d'\in\prodset$, $d\neq d'$ such that $\ranker_*(d) = \ranker_*(d')$, the SUTVA ranker $\ranker_*$ is not a valid ranker as producer items $d$ and $d'$ are demanding the same position in the merged ranking. In the next two sections, we review existing solutions in the literature to handle such merging conflicts, discuss their shortcomings and also motivate our proposed principles.

\subsection{Counterfactual Interleaving Design}\label{subsec:facebook}

In this paper, we will use the \textit{counterfactual interleaving design} to refer to the design of producer-side experiments based on merging (or interleaving) different counterfactual rankings.
It is important to distinguish it from the traditional interleaving designs \citep{radlinski_interleaving_2013, netflix_interleaving_2017, airbnb_interleaving_2022} for consumer-side experiments which are not based on the counterfactual rankings.

\cite{facebook_2020} from the Facebook Marketplace proposed counterfactual interleaving designs which only randomly label a small percent (e.g., 1\%) of producers as the control and treatment groups to minimize the chances of having merging conflicts and avoid the challenges in resolving merging conflicts in $\ranker_*$. The rest of the producers would still be shown in the recommender, but their metrics would not be included in the experiment analysis. This approach can be summarized as follows:

Step 1: Generate counterfactual rankings $\ranker_0$ and $\ranker_1$ as if the control or treatment model is ramped to 100\% of producers. 
 
Step 2: Merge $\ranker_0$ and $\ranker_1$ into the SUTVA ranker $\ranker_*$:
\begin{itemize}
\item For producer items in the control group, get their positions from the control counterfactual ranking $\ranker_0$.
\item For producer items in the treatment group, get their positions from the treatment counterfactual ranking $\ranker_1$.
\item For the rest of producer items that are neither in the treatment nor control groups, get their positions from either $\ranker_0$, $\ranker_1$ or another ranker, but these producer items would not be included in the treatment v.s. control comparison of the experiment.
\end{itemize}

Step 3: In case that a pair of producer items demand the same position in $\ranker_*$ (i.e., there exists $d, d'\in\prodset$, $d\neq d'$ such that $\ranker_*(d) = \ranker_*(d')$), simply decide their order randomly.

Figure~\ref{fig_facebook} provides a simple illustration of this approach. Because both treatment and control groups only contain a small fraction (e.g., 1\%) of producer items, \cite{facebook_2020} shows that the probability of having merging conflicts in Step 2 is very low. In case that a pair of producer items happen to demand the same position in $\ranker_*$, the method can randomly decide their order with equal probabilities. 
Because such merging conflicts are rare, the final ranker has the advantage that it is approximately a SUTVA ranker where the ranking position of each producer item does not depend on what ranker is applied to the other producer items.  

Obviously, downside of this counterfactual interleaving design is that its lower ramp \% of producer items vastly limits the experiment power. Although power may not be a concern for Facebook which has enormous amounts of online traffic, the method is not applicable to many other online recommender systems due to lack of power.

\begin{figure}[h]
\centering
\begin{minipage}{.5\textwidth}
  \centering
  \includegraphics[width=.9\linewidth]{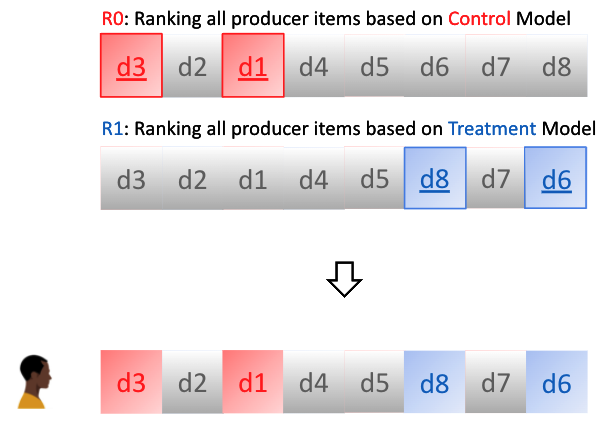}
  \caption{Counterfactual interleaving design from \cite{facebook_2020}}
  \label{fig_facebook}
\end{minipage}%
\begin{minipage}{.5\textwidth}
  \centering
  \includegraphics[width=.9\linewidth]{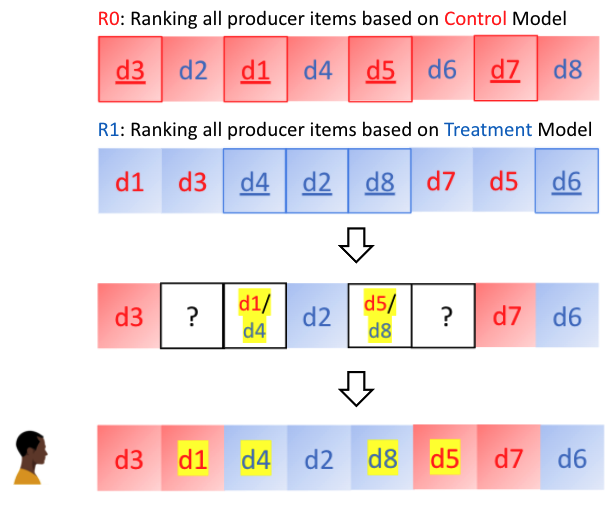}
  \caption{Counterfactual interleaving design with merging conflicts}
  \label{fig_unicorn}
\end{minipage}
\end{figure}

\cite{nandy2021unicorn} proposed the Unifying Counterfactual Rankings (\unicorn) approach, which is the same as  \cite{facebook_2020}'s counterfactual interleaving design except for allowing larger \% of producers to be included in the treatment and control groups to increase the power of the experiment. 
When the treatment and control counterfactual rankings have merging conflicts (i.e., there exists $d, d'\in\prodset$, $d\neq d'$ such that $\ranker_*(d) = \ranker_*(d')$) and the SUTVA ranker $\ranker_*$ is not a valid ranker, the UniCoRn ranker also chooses to resolve any merging conflicts randomly to ensure that the final ranker is a valid ranker. 
\cite{nandy2021unicorn} showed that the UniCoRn ranker is the ranker that gets closest to $\ranker_*$ in terms of the sum of squared error distance. 

Nevertheless, both \cite{facebook_2020} and \cite{nandy2021unicorn}'s solutions are ad hoc and have not addressed the critical question of how to correctly resolve merging conflicts in the counterfactual interleaving designs.  For example, when two producer items $d$ and $d'$ have merging conflict, how to determine the probability that $d$ should be placed ahead of $d'$? 
Only minimizing the sum of squared error distance $\dist(\ranker, \ranker_*)\define\sum_{d\in\prodset} \left( \ranker(d) - \ranker_*(d) \right)^2$ cannot determine the probabilities because any tie-breaking probabilities would lead to the same $\dist(\ranker, \ranker_*)$. In fact, choosing different tie-breaking probabilities could generate a large number of possible UniCoRn rankers with different distributions of the producer items, but, unfortunately, the majority of them would lead to biased comparisons between the treatment v.s. control groups. 

Consider the example of simple random tie-breaking (with equal probabilities) from \cite{facebook_2020}  and \cite{nandy2021unicorn}. 
If the treatment group is ramped at a small percent of traffic (e.g. 5\% of producers) while the control group is ramped at a large percent of traffic (e.g. 95\% of producers), the beginning part of the treatment counterfactual ranking $\ranker_1$ would contain very few treatment producer items. When equal probabilities are used to randomly break the ties in merging $\ranker_0$ and $\ranker_1$, it would be hard to have any treatment producer items to appear in the right positions (as determined by $\ranker_1$) in the beginning of the merged ranking where viewers mainly pay attention to. 
Consequently, the treatment counterfactual ranking $\ranker_1$ cannot be properly represented in the merged ranker and the experiment readouts would be biased against the treatment group. 
In order to accurately measure the effects of ranking changes on the producer side, we will develop two general principles for designing producer-side experiments in the following sections. 
Based on the proposed principles, we will show that when resolving the merging conflicts between $\ranker_0$ and $\ranker_1$, an unbiased counterfactual interleaving design need to follow a rigorous procedure to preserve the relative order of the SUTVA ranker and assign items from the smaller treatment group with a higher probability to be placed in the correct position (determined by its own counterfactual ranking). We will also show that the optimal order of conflicting producer items in some cases should be determined deterministically instead of randomly.

\section{Notations and Concepts} \label{sec:notation}

In this section, we define some key concepts and mathematical notations in producer-side experiments, which lays the groundwork for subsequent discussions on the design principles. 
Let $k \in \mck$ represent different treatment variants in the producer-side experiments. For example, $\mck = \{0,1\}$ where $k=0$ represents the control variant and $k=1$ represents the treatment variant. Suppose each producer item in $\prodset$ is randomly assigned into one of the treatment groups $\prodset_k$ with probability $p_k$.

\subsection{Experiment Readouts and Counterfactual Readouts}\label{subsec:cand_level_metrics_agg}

We first define the outcomes of producer-side experiments based on metrics aggregated at the producer (item) level.
Let $\finfo$ represent the set of information pertaining to the consumer and producer items' features, and let $\util(d)$ denote the outcome of a metric associated with producer item $d$. Typically, $\util(d)$ depends on $\finfo$ as well as the ranking position of the producer item, i.e., $\util(d) = \util(d; \ranker(d), \finfo)$. Without loss of generality, we assume the metric $\util$ is larger the better.

Let $\agg_k^*(\util)$ represent the \textit{counterfactual readout} which is an aggregation of metrics of all producer items under the assumption that ranker $\ranker_k$ is applied to 100\% of producers. This can be formally expressed as:
\begin{equation}\label{equ:agg_k_star_U}
  \agg_k^*(\util)\define\sum_{d\in \prodset} \util\left(d; \ranker_k\right).
\end{equation}

Ideally, we want to compare the counterfactual readouts $\left\{ \agg_k^*(\util) \right\}$ for various $k$ to determine the optimal ranker, but $\left\{ \agg_k^*(\util) \right\}$ are not observable from the experiment. Instead, we can only observe the \textit{experiment readout} which is an aggregation of $\util(d)$ for a treatment group $k$ in the experiment:
\begin{equation}
  \label{equ:aggregate_metrics}
  \agg_k\left( \util \right) \define \frac{1}{p_k} \sum_{d\in \prodset_k} \util\left(d; \ranker\right).
\end{equation}
Here $\ranker$ is the final merged ranker in the experiment and the coefficient $1/p_k$ accounts for the fact that each producer item has a probability $p_k$ of being included in $\prodset_k$.

Both $\agg_k$ and $\agg^*_k$ are random variables and we are particularly interested in their expected values. It is important to acknowledge that their randomness originates from multiple sources. One source is the feature set $\finfo$, and another source relates to the experimental design, i.e., how items are randomly allocated to treatments and how the merged ranker resolves merging conflicts. We will denote this experimental information by $\einfo$ and postulate the following assumption:
\begin{assumption}\label{assmp:e_f_independent}
  The distributions of $\finfo$ and $\einfo$ are independent.
\end{assumption}

We generally lack knowledge about the specifics of the distribution of $\finfo$, which is influenced by the complex interactions between consumers and producer items. However, we have complete knowledge regarding the distribution of $\einfo$, as it is determined by the experimental design. Consequently, expectations will always be taken with respect to $\einfo$ and conditioned on $\finfo$. This is denoted by the operator $\expected[\einfo]{\cdot}$ or $\expected{\cdot\vert\finfo}$.

Moreover, it is crucial to note that the experimental-related randomness from $\einfo$ is only present in $\agg_k(\util)$, while $\agg^*_k(\util)$ is measurable with respect to $\finfo$, i.e.,
\begin{align*}
  \expected{\agg^*_k(\util)\vert \finfo} = \agg^*_k(\util),
\end{align*}
and the expectation with respect to $\einfo$ only needs to be considered for $\agg_k(\util)$.

Let us define
\begin{align}
  \label{equ:agg_k_bar}
  \overline{\agg_k}(\util) \define \expected{\agg_k(\util)\vert\finfo},
\end{align}
as the expected value of the experiment readout, conditioned on $\finfo$. Ideally, we want $\overline{\agg_k}(\util)$ to be an unbiased estimator of the counterfactual readout $\agg_k^*(\util)$, which requires SUTVA to be met. Unfortunately, as explained in the previous section, SUTVA is often violated in the producer-side experiments due to merging conflicts and this is why we need to develop new design principles to ensure that valid conclusions can be drawn from producer-side experiment readouts. 

\subsection{Attention Functions and Convoluted Attention Functions} \label{subsec:attnFun}
In the user interface of a recommender system, different positions or spots receive varying degrees of visibility or attention from the consumers. For any position $j$, let $\attention(j)\geq 0$ represent the amount of attention garnered by a producer item at that position, where the \textit{attention function} $\attention(j)$ is monotonic decreasing as positions in the recommender system are indexed in such a way that smaller indices receive more attention from the consumer. For example, top ranking spot ($j=1$) in the recommender receives the highest attention from the consumer and the spots at the end receive little attention as few consumers would scroll far down the page. 

Based on the attention function, we introduce an assumption regarding the structure of the observed metric outcome $\util$ associated with each producer item.

\begin{assumption}
\label{assmp:utility_function_decompose_utility_and_attention}
For producer item $d$, its metric outcome $\util(d)$ can be decomposed into a product of a pure metric $\pureutil(d)$ representing the inherent utility of $d$, which is independent of the ranking of the producer items, and an attention function $\attention$ that solely depends on the rank or position $\ranker(d)$ of the producer item. Formally, we have:

  \begin{equation*}
    \pureutil: \prodset\to[0,\infty),  \attention: \spotset\to[0, \infty),
  \end{equation*}
  and
  \begin{equation}
    \label{equ:utility_function_decompose_utility_and_attention}
    \util(d) = \pureutil(d)\times\attention\circ\ranker(d).
  \end{equation}
\end{assumption}

In the equation above, we use $f \circ g$ to denote the composition of functions $f$ and $g$, such that $f \circ g(d) = f[g(d)]$.
This decomposition reflects the fact that a producer item's metric can be viewed as the product of its inherent quality (captured by $\pureutil$) and the attention it receives based on its position (captured by $\attention$).

Let us consider for a given position $j$ and item $x$ such that $\ranker_0(x) = j$. When $x$ is in the control group $\prodset_0$, the merged ranker $\ranker$ may not always place $x$ at position $j$ due to merging conflicts. Instead, the final position of $x$ under $\ranker$ can vary, and follows a certain distribution. Let us denote this distribution by $\convcore{0}{j}$, such that for any position $j'$,
\begin{equation*}
	\convcore{0}{j}(j') = \prob{\ranker(x) = j'\vert x\in\prodset_0}.
\end{equation*}
Due to such randomness, the average level of attention that the producer item $x$ receives when $x\in\prodset_0$, is not strictly $\attention(j)$, but rather a weighted average of $\attention(j')$, taking into account the probabilities $\left\{ \convcore{0}{j}(j') \right\}$. Let us define this “average attention” as $\attention^0(j)$:
\begin{equation} \label{equ:average_attention}
	\attention^0(j)\define \sum_{j'}\convcore{0}{j}(j')\attention(j').
\end{equation}
Since $j$ can be any position, equation \eqref{equ:average_attention} essentially gives rise to a new “attention function” $\attention^0$, which we will refer to as the \textit{convoluted attention function}. The effect of resolving merging conflicts in $\ranker$ can be conceptualized as a transformation of the underlying attention function from $\attention$ to $\attention^0$ for the control group and its ranker $\ranker_0$.
Similarly, this can also be applied to the treatment group and its ranker $\ranker_1$: For any position $j$, define
\begin{equation*}
	\convcore{1}{j}(j') = \prob{\ranker(y) = j'\vert y\in\prodset_1},
\end{equation*}
and
\begin{equation*} 
	\attention^1(j)\define \sum_{j'}\convcore{1}{j}(j')\attention(j').
\end{equation*}
The resolution of merging conflicts by $\ranker$ essentially imposes a transformation of the attention function from $\attention$ to $\attention^1$ for the ranker $\ranker_1$.

In general, consider a treatment $k \in \mck$, a spot $j \in \spotset$, and the merged ranker $\ranker$ that is utilized as the final ranker in the producer-side experiment as described in equation \eqref{equ:aggregate_metrics}. Let $\spotassigner$ be the inverse of ranker function $\ranker$:
\begin{equation*}
	\spotassigner(j) = d \text{ if and only if } \ranker(d)  = j.
\end{equation*}
which is a spot-filling function that maps a spot index $j$ back to the producer item that occupies it. 
Suppose the producer item $d$ would occupy spot $j$ according to ranker $\ranker_k$: $\ranker_k(d) = j$ or $d = \spotassigner_k(j)$. Due to merging conflicts as discussed in Section~\ref{sec:sutva}, the final merged ranker cannot guarantee $\ranker(d) = j$. Instead, $\ranker(d)\vert_{d \in \prodset_k}$ is random and let $\convcore{k}{j}$ represent the distribution of this random variable:
\begin{align}
  \label{equ:def_conv_core}
  \convcore{k}{j}(j') = \prob{\ranker\circ\spotassigner_k(j) = j'\vert\spotassigner_k(j)\in\prodset_k}.
\end{align}
Obviously, the forms of convolution kernels $\left\{ \convcore{k}{j} \right\}$ are determined by the rankers $\left\{\ranker_k\right\}$ and way they are merged into $\ranker$. They do not depend on the specific form of attention functions.

Define $\attention^k$ as the convolution of the probability family $\left\{ \convcore{k}{j} \right\}$ with the attention function $\attention$ and call $\left\{ \convcore{k}{j} \right\}$ the set of convolution kernels. For any spot $j \in \spotset$, we can formally express the convoluted attention function as:
\begin{align}
  \label{equ:def_h_conv}
  \attention^k(j) = \sum_{j'\in\spotset} \convcore{k}{j}(j')\times\attention(j').
\end{align}
The convolution kernels $\left\{ \convcore{k}{j} \right\}$ quantifies the deviation of the merged ranker $\ranker$ from the SUTVA ranker $\ranker_*$. In the specific case where $\ranker = \ranker_*$ (e.g., an A/A test scenario), $\convcore{k}{j}$ becomes $\mass{j}$ (i.e., the distribution concentrated on the single spot $j$), and consequently, $\attention^k$ simplifies to $\attention$.

\section{New Design Principles}\label{sec:math_framework}

As we have discussed in Section~\ref{sec:wrong_approaches}, currently there are no rigorous guiding principles available for designing producer-side experiments when SUTVA is violated, and the existing solutions in the literature are ad hoc which would lead to biased designs. In this section, we will develop general principles for designing producer-side experiments that are essential for any design solutions to follow.

\subsection{Principle of Consistency}  \label{subsec:rule_merger}

For a randomized experiment, a fundamental requirement is that the treatment and control groups need to be comparable and the only expected difference between them is caused by the intervention (i.e., different rankers) being studied. 
In producer-side experiments, however, the existing ad hoc solutions often introduce other confounding factors between the treatment and control groups (i.e., receiving different amounts of attentions from consumers) which would bias the experiment results.  
For example, consider the top-ranking position (i.e., spot 1) in the recommender which receives the highest attention from consumers. Assume $\ranker_0$ ranks item $x$ as the top candidate while $\ranker_1$ ranks item $y$ as the top candidate: $\ranker_0(x) = \ranker_1(y) = 1$. Due to the merging conflict, when $x \in \prodset_0$, the spot for $x$ in the final merged ranker $\ranker(x)\vert_{x\in\prodset_0}$ will be random, where the randomness is determined by the specific method with which $\ranker$ merges $\ranker_0$ and $\ranker_1$. 
Suppose for $x\in\prodset_0$, it has $80\%$ chance of being placed at the best spot (spot $1$ in $\ranker$) and for $y\in\prodset_1$, it has a 30\% chance of being ranked first in $\ranker$.
Clearly, this design is biased in favor of $\ranker_0$: the best candidate $x$ according to $\ranker_0$ has a higher chance of getting the top spot compared to the best candidate $y$ according to $\ranker_1$. Such bias will be reflected in the final experiment outcome and be confounded with the treatment effect under study.

To ensure an apple-to-apple comparison in the producer-side experiments, a fair design should require that $\ranker(x)|_{x \in \prodset_0}$ and $\ranker(y)|_{y \in \prodset_1}$ have the same distribution. This generalizes beyond just the top spot and applies for any integer $j > 0$: when $x\in\prodset_0$, it should have the same chances to be placed at spot $j$ ($j = 1, 2, \ldots$) as $y$ does under $y\in\prodset_1$.  In other words, in order for the treatment and control groups to be comparable under the merged ranker $\ranker$, we need to require that, for any ranking spot $j$, and for any producer items $x$ and $y$ such that $\ranker_0(x) = \ranker_1(y) = j$, the distributions of $\ranker(x)|_{x \in \prodset_0}$ and $\ranker(y)|_{y \in \prodset_1}$ should be identical.
This requirement will be referred to as the \textit{Principle of Consistency}, which is formally defined below. 
\begin{principle}[Principle of Consistency]
  \label{def:conv_attn_consistent}
In designing producer-side experiments, the convolution kernels $\left\{ \convcore{k}{j} \right\}$ defined in equation~\eqref{equ:def_conv_core}, which represent the distributions of $\ranker(d)\vert_{d \in \prodset_k}$ for any given spot $j$,  should be invariant with respect to $k$.
\end{principle}

This principle can be formally justified based on the mathematical framework defined in Section~\ref{sec:notation}. 
Based on equations \eqref{equ:agg_k_star_U} and \eqref{equ:utility_function_decompose_utility_and_attention}, the counterfactual readout $\agg^*_k(\util)$  can be expressed as:
\begin{align}
  \label{equ:agg_star_k_util_decomp}
  \begin{split}
    \agg^*_k(\util) & = \sum_{d\in \prodset} \pureutil(d)\times\attention\circ\ranker_k(d) \\
  & = \sum_{j\in \spotset} \pureutil\circ\spotassigner_k(j)\times\attention(j)
  \end{split}
\end{align}
where $\spotassigner_k(j) = d \text{ if and only if } \ranker_k(d)  = j$.
It can be seen that $\agg^*_k(\util)$ depends on $k$ only through the different rankers $\ranker_k$ (or $\spotassigner_k$) while the attention function $\attention$ is the same for different $k$.
Our next theorem below shows that this ideal property cannot be guaranteed in the observed experiment readouts $\left\{ \agg_k(\util) \right\}$ or their expected values $\left\{ \overline{\agg_k}(\util) \right\}$. 
\begin{theorem} \label{thm:conv_attn}
Under Assumptions \ref{assmp:e_f_independent} and \ref{assmp:utility_function_decompose_utility_and_attention}, for any treatment $k \in \mck$, $\overline{\agg_k}(\util)$ can be computed as follows:
  \begin{align}
    \label{equ:change_of_attn}
\overline{\agg_k}(\util)  =  \sum_{j\in \spotset} \pureutil\circ\spotassigner_k(j)\times\attention^k(j) 
  \end{align}
\end{theorem}

Proof of this theorem is given in Appendix \ref{app:proof_thm_conv_attn}.
By comparing Equations \eqref{equ:agg_star_k_util_decomp} and \eqref{equ:change_of_attn}, we can see that the difference between $\agg^*_k(\util)$ and $\overline{\agg_k}(\util)$ is effectively a modification of the attention function through convolution, denoted as $\attention\to\attention^k$. Furthermore, \eqref{equ:change_of_attn} shows that for various $k$, differences in $\overline{\agg_k}(\util)$ not only are due to the differences in rankers $\ranker_k$ (or $\spotassigner_k$) but they can also be caused by different convoluted attention functions $\attention^k$. This makes it indiscernible whether the disparities in the expected experiment readouts $\left\{ \overline{\agg_k}(\util) \right\}$ between the treatment and control groups stem from the rankers or their attention functions.
To mitigate this confounding ambiguity, a correctly designed producer-side experiment must ensure that the convoluted attention functions $\attention^k$ remain independent of $k$ (i.e., while treatment and control groups correspond to different rankers, they must share the same convoluted attention function to be comparable). Given that the specific form of the attention function $\attention$ is unknown, the only way to assure independence of $\attention^k$ on $k$ is by requiring that the convolution kernel $\convcore{k}{j}$ does not rely on $k$ in equation \eqref{equ:def_h_conv}, which formalizes the principle of consistency above.

\subsection{Principle of Monotonicity}

In addition to the consistency principle, in this section we introduce another important principle for designing producer-side experiments. 
Because the spots or positions in a recommender system are indexed according to the level of attention they receive (i.e., position 1 receives the highest attention, position 2 the second highest, etc.), the attention function $\attention$ is inherently defined to be monotonically non-increasing: $\attention(1)\geq \attention(2) \geq \dots$ and the  recommender is designed to place the most suitable (highest ranked) producer item in position 1, the second best in position 2, and so on. However, due to the randomness from resolving merging conflicts in the producer-side experiments, the average level of attention that a producer item receives is not strictly $\attention$ but a weighted average of $\attention$, which is defined as the convoluted attention $\attention^k$ in Section~\ref{subsec:attnFun}. As a result, for producer-side experiments to be valid, we not only need to have monotonically non-increasing attention function $\attention$, but also need to require the convoluted attention function $\attention^k$ to retain this monotonic characteristic. 
We will refer to this requirement as the Principle of Monotonicity.  

Since the exact form of the attention function $\attention$ is unknown, the design of producer-side experiments can leverage the convolution kernels $\convcore{k}{j}$ in equation~\eqref{equ:def_h_conv} to ensure $\attention^k$ is decreasing. 
Let $F_{\pi}(x)$ denote the cumulative distribution function (CDF) of a distribution $\pi$ on real numbers, i.e.,
\begin{equation*}
  F_{\pi}(x) \define \prob[\pi]{X\leq x}
\end{equation*}
where $X$ follows the distribution $\pi$. We can then define the partial order relation as
\begin{equation*}
  \pi_1 \prec \pi_2 \Leftrightarrow \forall x, F_{\pi_1}(x) \geq F_{\pi_2}(x).
\end{equation*}
It is a well-established fact that for any monotonically decreasing function $\attention$ on the real line, if $\pi_1 \prec \pi_2$, then $\expected[\pi_1]{\attention} \geq \expected[\pi_2]{\attention}$.
We can now formally introduce the Principle of Monotonicity as follows:
\begin{principle}[Principle of Monotonicity]
  \label{def:monotocity}
  In designing producer-side experiments, the convoluted attention functions $\attention^k$ must be monotonically non-increasing for any $k$ and for any non-increasing attention function $\attention$. Equivalently, the convolution kernels $\left\{ \convcore{k}{j} \right\}$ defined in equation~\eqref{equ:def_conv_core} needs to be non-decreasing with respect to $j$, for any $k$. 
\end{principle}

Using the mathematical framework defined in Section~\ref{sec:notation}, we can provide some further justifications of this monotonicity principle.

\begin{lemma}
  \label{lem:util_attn_line_up}
  Define $\ranker_{\pureutil}$ as a ranker that ranks producer items $d \in \prodset$ according to their pure metric $\pureutil(d)$ (in descending order). Then, ranker $\ranker_{\pureutil}$ satisfies:
  \begin{equation*}
    \ranker_{\pureutil} 
    = \amax_{\ranker}\sum_{d\in\prodset}\pureutil(d)\times\attention\circ\ranker(d).
  \end{equation*}
\end{lemma}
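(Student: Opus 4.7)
The plan is to prove this by a standard rearrangement / exchange argument, since the claim is essentially the classical rearrangement inequality: when $\attention$ is monotonically decreasing, the sum $\sum_d \pureutil(d)\cdot\attention(\ranker(d))$ is maximized by pairing the largest values of $\pureutil$ with the smallest values of $\ranker$ (i.e., the positions with the greatest attention).

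First I would fix an arbitrary ranker $\ranker$ and show that if it is not already sorting producer items in descending order of $\pureutil$, then a single transposition strictly weakly improves the objective. Concretely, suppose there exist $d,d' \in \prodset$ with $\pureutil(d) > \pureutil(d')$ but $\ranker(d) > \ranker(d')$. Define $\ranker'$ to be the ranker that swaps the positions of $d$ and $d'$ and agrees with $\ranker$ elsewhere. A direct computation gives
\begin{equation*}
\sum_{e\in\prodset}\pureutil(e)\,\attention\circ\ranker'(e) - \sum_{e\in\prodset}\pureutil(e)\,\attention\circ\ranker(e) = \bigl(\pureutil(d)-\pureutil(d')\bigr)\bigl(\attention(\ranker(d'))-\attention(\ranker(d))\bigr).
\end{equation*}
By assumption the first factor is positive, and since $\ranker(d)>\ranker(d')$ and $\attention$ is monotonically decreasing (as stated in Section~\ref{sec:intuition_mono} and formalized via Assumption~\ref{assmp:utility_function_decompose_utility_and_attention}), the second factor is nonnegative. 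Hence the swap does not decrease the objective.

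Second, I would argue that finitely many such swaps terminate at $\ranker_{\pureutil}$. The standard way to see this is to count the number of \emph{inversions} of $\ranker$ relative to the descending $\pureutil$-order, i.e., pairs $(d,d')$ with $\pureutil(d)>\pureutil(d')$ yet $\ranker(d)>\ranker(d')$; each swap described above strictly decreases this count by one, so after finitely many swaps the ranker has no inversions and therefore coincides (up to ties in $\pureutil$) with $\ranker_{\pureutil}$. Combined with the previous step, this shows the objective value at $\ranker_{\pureutil}$ is at least that of any $\ranker$, proving $\ranker_{\pureutil}\in\amax_{\ranker}\sum_{d}\pureutil(d)\,\attention\circ\ranker(d)$.

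The only mildly delicate point is the handling of ties in $\pureutil$: if several producer items share the same $\pureutil$ value, then $\ranker_{\pureutil}$ is not uniquely defined, and the $\amax$ set contains multiple rankers. This is not really an obstacle, though; the exchange argument still shows that any tie-breaking consistent with descending $\pureutil$ order achieves the maximum, so the statement holds with the understanding that $\ranker_{\pureutil}$ denotes any such ranker. No other step poses a real difficulty, so the proof is essentially a clean invocation of the rearrangement inequality tailored to the recommender-system notation introduced in the paper.
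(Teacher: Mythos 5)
Your proof is correct and is essentially the same exchange (rearrangement) argument the paper uses: identify an inverted pair, swap it, observe the objective change factors as $\bigl(\pureutil(d)-\pureutil(d')\bigr)\bigl(\attention(\ranker(d'))-\attention(\ranker(d))\bigr)\ge 0$, and iterate to reach $\ranker_{\pureutil}$. Your added remarks on termination via inversion counting and on ties in $\pureutil$ are slightly more careful than the paper's own treatment (though a non-adjacent swap reduces the inversion count by an odd number, not necessarily exactly one), but they do not change the substance of the argument.
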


Proof of this Lemma is given in Appendix~\ref{app:proof_util_attn_line_up}. 
Lemma \ref{lem:util_attn_line_up} demonstrates that the ranker which maximizes the counterfactual readout $\agg^*(\util)= \sum_{d\in\prodset}\util(d) = \sum_{d\in\prodset}\pureutil(d)\times\attention\circ\ranker(d)$ aligns precisely with the ranking based on producer item's pure metric $\pureutil(d)$. This provides a fundamental justification for why recommendation systems aim to model producer item's intrinsic utility $\pureutil(d)$ and use the scores obtained from these models to establish ranking. However, it is crucial to recognize that this alignment hinges on the attention function being monotonically non-increasing.
Building upon Lemma \ref{lem:util_attn_line_up}, we can have the following corollary.

\begin{corollary} \label{cor:maximizer_counterf} 
Assuming there are $K$ rankers $\ranker_0$, ..., $\ranker_{K-1}$ being compared in a producer-side experiment, and one of them, say $\ranker_k$, is equivalent to $\ranker_{\pureutil}$ as defined in Lemma \ref{lem:util_attn_line_up}. If the attention function $\attention$ is monotonically non-increasing, then the counterfactual readout $\agg^*_k(\util)$ for group $k$ is superior to those of other groups, meaning
	\begin{equation*}
		\forall k'\neq k, \agg^*_k(\util) \geq \agg^*_{k'}(\util).
	\end{equation*}
\end{corollary}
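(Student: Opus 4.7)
The plan is to observe that this corollary follows almost immediately by combining the definition of the counterfactual readout in \eqref{equ:agg_k_star_U}, the metric decomposition in Assumption~\ref{assmp:utility_function_decompose_utility_and_attention}, and the optimality statement in Lemma~\ref{lem:util_attn_line_up}. There is essentially no new content beyond chaining these three ingredients together, so I would write the proof as a short computation rather than a multi-step argument.

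First, I would unfold the definition of $\agg^*_k(\util)$ for any variant $k'$:
\$
\agg^*_{k'}(\util) = \sum_{d\in\prodset}\util\bigl(d;\ranker_{k'}\bigr)
= \sum_{d\in\prodset}\pureutil(d)\times\attention\circ\ranker_{k'}(d),
\$
where the second equality invokes Assumption~\ref{assmp:utility_function_decompose_utility_and_attention}. This rewrites every counterfactual readout as an inner product between the pure metric vector $\pureutil$ and the attention vector induced by the ranker $\ranker_{k'}$, which is exactly the objective appearing in Lemma~\ref{lem:util_attn_line_up}.

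Next, I would invoke Lemma~\ref{lem:util_attn_line_up} under the hypothesis that $\attention$ is monotonically decreasing: the lemma asserts that $\ranker_{\pureutil}$ is the maximizer of $\sum_{d\in\prodset}\pureutil(d)\times\attention\circ\ranker(d)$ over all rankers. Since by hypothesis $\ranker_k = \ranker_{\pureutil}$, this maximization immediately gives
\$
\agg^*_k(\util)
= \sum_{d\in\prodset}\pureutil(d)\times\attention\circ\ranker_k(d)
\ge \sum_{d\in\prodset}\pureutil(d)\times\attention\circ\ranker_{k'}(d)
= \agg^*_{k'}(\util)
\$
for every $k'\neq k$, which is the claim.

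Because the corollary is a direct consequence of Lemma~\ref{lem:util_attn_line_up}, there is no real obstacle. The only thing to double check is that the assumption $\pureutil(d)\ge 0$ from Assumption~\ref{assmp:utility_function_decompose_utility_and_attention} is tacit in the swap argument used inside the lemma (so that $[\pureutil(d_0)-\pureutil(d_1)]\cdot[\attention\circ\ranker(d_1)-\attention\circ\ranker(d_0)]\ge 0$ follows from both factors having the same sign); this is already handled in the lemma's proof, so nothing further is required here.
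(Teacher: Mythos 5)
Your proposal is correct and matches the paper's treatment: the paper states this corollary as an immediate consequence of Lemma \ref{lem:util_attn_line_up} (offering no separate proof), and your chain of unfolding \eqref{equ:agg_k_star_U}, applying Assumption \ref{assmp:utility_function_decompose_utility_and_attention}, and invoking the lemma's maximizer property is exactly the intended argument. (Minor aside: in the lemma's swap step the sign of each factor comes from the ordering hypotheses $\pureutil(d_0)>\pureutil(d_1)$ and $\ranker(d_0)>\ranker(d_1)$ together with monotonicity of $\attention$, not from $\pureutil\ge 0$, but this does not affect your proof of the corollary.)
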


Corollary \ref{cor:maximizer_counterf} indicates that the counterfactual readout $\agg^*(\util)$ from producer-side experiments can be used to correctly identify the best ranker if the attention function is monotonically non-increasing. 
Nevertheless, in practice we cannot directly observe the counterfactual readout $\agg^*(\util)$. 
The following key corollary, which is derived based on the observed experiment readouts $\left\{ \agg_k(\util) \right\}$ or their expected values $\left\{ \overline{\agg_k}(\util) \right\}$, highlights the importance of having both the consistency and monotonicity principles in designing producer-side experiments:
\begin{corollary}
  \label{cor:consst_mono_ub_mm}
Assume $K$ rankers $\ranker_0$, ..., $\ranker_{K-1}$ are compared in a producer-side experiment where one of them satisfies $\ranker_k=\ranker_{\pureutil}$ as defined in Lemma \ref{lem:util_attn_line_up}. 
If the merged ranker $\ranker$ from the design adheres to both consistency and monotonicity principles, then
	\begin{equation*}
		\forall k'\neq k, \overline{\agg_k}(\util) \geq \overline{\agg_{k'}}(\util),
	\end{equation*}
which implies that the best ranker $\ranker_k$ can be correctly identified based on the expected values of the observed experiment readouts.
\end{corollary}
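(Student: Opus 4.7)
The plan is to reduce the comparison of the expected experiment readouts $\overline{\agg_k}(\util)$ to a form where Lemma~\ref{lem:util_attn_line_up} applies directly, with a single modified attention function shared across all treatment variants. First I would invoke Theorem~\ref{thm:conv_attn}, which, under Assumptions~\ref{assmp:e_f_independent} and \ref{assmp:utility_function_decompose_utility_and_attention}, gives the identity
\begin{align*}
\overline{\agg_k}(\util) = \sum_{j\in\spotset}\pureutil\circ\spotassigner_k(j)\times\attention^k(j)
\end{align*}
for every $k\in\mck$. The strategy is to show that the factor $\attention^k$ does not in fact depend on $k$ and is still monotonically decreasing in $j$, whence Lemma~\ref{lem:util_attn_line_up} immediately produces the claim.

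Next I would exploit the two principles to collapse $\attention^k$ to a single monotonically decreasing function $\attention^*$ common to all $k$. The Principle of Consistency (Definition~\ref{def:conv_attn_consistent}) states that the convolution kernels $\convcore{k}{j}$ are invariant in $k$, so by \eqref{equ:def_h_conv} the transformed attention satisfies $\attention^k=\attention^*$ for every $k$, where I set $\attention^*(j)\define\sum_{j'}\convcore{0}{j}(j')\attention(j')$. The Principle of Monotonicity (Definition~\ref{def:monotocity}) then says $\convcore{k}{j_1}\prec\convcore{k}{j_2}$ whenever $j_1<j_2$. Combined with the standing fact that $\attention$ is monotonically decreasing, the stochastic-dominance statement recalled just before Definition~\ref{def:monotocity} gives $\attention^*(j_1)=\expected[\convcore{0}{j_1}]{\attention}\geq\expected[\convcore{0}{j_2}]{\attention}=\attention^*(j_2)$, so $\attention^*$ inherits the monotone decreasing property.

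Finally, using that $\spotassigner_k$ is the inverse bijection of $\ranker_k$, re-indexing the sum yields
\begin{align*}
\overline{\agg_k}(\util)=\sum_{d\in\prodset}\pureutil(d)\times\attention^*\circ\ranker_k(d),
\end{align*}
and the same identity holds with $k$ replaced by any $k'\neq k$. Applying Lemma~\ref{lem:util_attn_line_up} with $\attention$ replaced by the (still monotonically decreasing) $\attention^*$ shows that the ranker maximizing this sum over all valid rankers is $\ranker_{\pureutil}$, which by hypothesis equals $\ranker_k$. Hence $\overline{\agg_k}(\util)\geq\overline{\agg_{k'}}(\util)$ for every $k'\neq k$, as claimed.

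The proof is essentially mechanical once the two principles are threaded through Theorem~\ref{thm:conv_attn}; the main conceptual point is to recognize that consistency is precisely what eliminates the $k$-dependence of the transformed attention function (so that comparing expected readouts reduces to comparing rankers against a common attention), while monotonicity is precisely what preserves the decreasing-attention structure on which Lemma~\ref{lem:util_attn_line_up} critically relies. The step I expect to require the most care is the stochastic-order argument in the second paragraph: one must verify that Definition~\ref{def:monotocity}'s requirement (phrased through the CDF order $\prec$) genuinely implies pointwise monotonicity of $\attention^*$ whenever $\attention$ is monotonically decreasing, via the standard orientation-reversal between $\prec$-dominance and expectations of decreasing functions.
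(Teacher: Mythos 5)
Your proof is correct and follows exactly the route the paper intends (the paper states this corollary without an explicit proof, but the surrounding discussion in Section~\ref{subsec:rule_merger} makes clear the argument is precisely yours): apply Theorem~\ref{thm:conv_attn}, use consistency to collapse $\attention^k$ to a single $k$-independent function, use monotonicity plus the stochastic-order fact to show that function is still decreasing, and then invoke Lemma~\ref{lem:util_attn_line_up} with the transformed attention function. No gaps; the re-indexing via $\spotassigner_k=\ranker_k^{-1}$ and the orientation of the $\prec$ order are both handled correctly.
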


\section{Solution for the Counterfactual Interleaving Design} \label{sec:new_principles}

In Sections \ref{subsec:facebook}, we have discussed how the existing counterfactual interleaving designs lack a systematic strategy to resolve the merging conflicts when trying to create a valid merged ranker $\ranker$ based on the SUTVA ranker $\ranker_*$. 
Based on the proposed design principles from Section~\ref{sec:math_framework}, we are now able to develop a rigorous solution of counterfactual interleaving designs for producer-side experiments.  

Consider producer-side experiments comparing two ranking models, where all the producer items are split into two groups: the control group $\prodset_0$ and the treatment group $\prodset_1$. We propose to create the counterfactual interleaving design for any possible ramping percentages of producers (Figure~\ref{fig_unicorn}) through the following steps:

Step 1: Generate counterfactual rankings $\ranker_0$ and $\ranker_1$ as if the control or treatment model is ramped to 100\% of the producer items.

Step 2: Merge $\ranker_0$ and $\ranker_1$ to get the SUTVA ranker $\ranker_*$ as defined in equation \eqref{equ:optimal_self_contradicting_design_R}:
For producer items in the control group, get their positions from the control counterfactual ranking $\ranker_0$.
For producer items in the treatment group, get their positions from the treatment counterfactual ranking $\ranker_1$.  

Step 3: Create a valid ranker $\ranker$ based on the SUTVA ranker $\ranker_*$ such that:
\begin{itemize}
\item Preserve the relative order: For any two producer items $d\neq d'$, if $\ranker_*(d) < \ranker_*(d')$, then $\ranker(d) < \ranker(d')$
\item Break the tie according to the probabilistic rule: For any pair of producer items having merging conflicts in $\ranker_*$ (i.e. $d\neq d'$ but $\ranker_*(d) = \ranker_*(d')$), break the tie by placing $d$ before $d'$ in $\ranker$ with probability $\beta_0(d, d')$.
\end{itemize} 

In the above procedure, steps 1 and 2 are the same as the existing solutions while step 3 is a different strategy which is proposed to resolve any possible merging conflicts in the counterfactual interleaving design and ensure an unbiased comparison between treatment and control rankers in the producer-side experiments.
The key is to rigorously preserve the relative order and employ a non-constant tie-breaking probability $\beta_0(d,d')$ whose value we will derive based on the principle of consistency and monotonicity next.

For each spot $j$, consider a pair of producer items $x$ and $y$ satisfying $\ranker_0(x) = \ranker_1(y) = j$. As discussed at the end of Section~\ref{sec:sutva}, $x$ and $y$ would have merging conflict $\ranker_*(x) = \ranker_*(y)$ if and only if $x\in\prodset_0$ and $y\in\prodset_1$. When they have merging conflict, the tie-breaking probability for placing $x$ before $y$ in the merged ranker $\ranker$ can be defined as:
\begin{align*}
\beta_0(x,y) \define \prob{\ranker(x) < \ranker(y)\vert \ranker_*(x) = \ranker_*(y) } = \prob{\ranker(x) < \ranker(y)\vert x\in\prodset_0, y\in\prodset_1}.
\end{align*}
Note that although $\ranker_0(x) = \ranker_1(y) = j$, $\ranker_1(x)$ and $\ranker_0(y)$ can still be larger or smaller than $j$. In the next theorem, we will show that their values should determine the probability $\beta_0(x,y)$.
\begin{theorem} \label{thm:beta0}
    The following $\beta_0(x,y)$ ensures that the merged ranker $\ranker$ satisfies the Principle of Consistency (i.e., $\ranker(x)\vert_{x\in\prodset_0}$ and $\ranker(y)\vert_{y\in\prodset_1}$ have identical distributions):
    \begin{equation}
  \label{equ:beta_0_LHS_RHS_equal}
  \beta_0(x,y) = 
    \begin{cases}
	    p_1\ \textsf{if}\ \ranker_1(x) > j \textsf{ and } \ranker_0(y) > j, \\
	    p_0\ \textsf{if}\ \ranker_1(x) < j \textsf{ and } \ranker_0(y) < j, \\
	    1\ \textsf{if}\ \ranker_1(x) > j \textsf{ and } \ranker_0(y) < j, \\
	    0\ \textsf{if}\ \ranker_1(x) < j \textsf{ and } \ranker_0(y) > j, 
    \end{cases}
\end{equation}
where $p_0$ represents the \% of traffic allocated to the control group and $p_1$ represents the \% of traffic allocated to the treatment group.
\end{theorem}
Proof of this theorem is given in the Appendix~\ref{app:proof_thm_beta0}. In the following theorem, we further prove that the above solution of counterfactual interleaving design also satisfies the monotonicity principle. 
\begin{theorem}
  \label{thm:ntreat1_unicorn_monotonic}
  The merged ranker $\ranker$ in the counterfactual interleaving design created by the above procedure with the tie-breaking probability $\beta_0(d,d')$ derived in equation~\eqref{equ:beta_0_LHS_RHS_equal} at each spot $j$ is both consistent and monotonic.
\end{theorem}

Proof of this theorem is left in the Appendix \ref{app:proof_theorem_ntreat1_unicorn_monotonic}. It is crucial to see that by following the consistency and monotonicity principles, we can obtain a rigorous counterfactual interleaving design solution to ensure valid comparisons between the treatment and conrol rankers in the producer-side experiments.

\section {Examples}\label{sec:exm}
To compare two rankers $R_0$ and $R_1$ in the producer-side experiment,  we have shown how to create a consistent and monotonic merged ranker in the counterfactual interleaving design in Section \ref{sec:new_principles}. In this section, we illustrate the previous theoretical results with both simulated and real examples.

\subsection{Consistent Convolution Kernels and Convoluted Attention Functions}\label{subsec:example_consist}

Consider ten producer items $x_0, \dots, x_9$, and two rankers $\ranker_0$ and $\ranker_1$. Suppose $\ranker_0$ ranks them as:
\begin{equation}
	x_0, x_1, x_2, x_3, x_4, x_5, x_6, x_7, x_8, x_9,
\end{equation}
while the order under $\ranker_1$ is
\begin{equation}
	x_5, x_6, x_7, x_8, x_9, x_0, x_1, x_2, x_3, x_4.
\end{equation}
We first illustrate the convolution kernels and convoluted attention functions as defined in equations \eqref{equ:def_conv_core} and \eqref{equ:def_h_conv}. It is important to note that the convolution kernel $\convcore{0}{j}$ is simply the distribution $\ranker(x_j)$ under the condition $x_j \in \prodset_0$, where $\ranker$ represents the consistent merged ranker and $x_j$ is a producer item such that $\ranker_0(x_j) = j$.

Let us assume that the traffic allocation are $p_0 = 50\%$ to $\prodset_0$ and $p_1 = 50\%$ to $\prodset_1$. In this scenario, $\convcore{0}{j}$ can be computed numerically as depicted in Fig. \ref{fig_conv_core}. Each curve in the chart corresponds to a distribution function $\convcore{0}{j}$ with $j$ indicated in the top-right legend.

\begin{figure}[h]
  \centering
  \includegraphics[width=70mm]{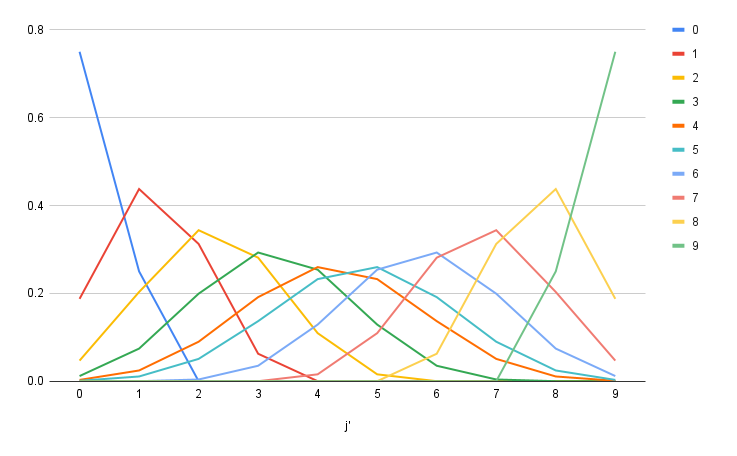}
  \caption{Convolution Kernel Examples}\label{fig_conv_core}
\end{figure}

We can also examine a single convolution kernel, $\convcore{0}{4}$ (i.e., $j = 4$), for different values of traffic allocations $p_0$ as in Fig. \ref{fig_conv_core_j4}.
Each curve in this figure represents the function $\convcore{0}{4}$ for a specific value of $p_0$, as shown in the legend. It is evident that as $p_0$ approaches $1$, the convolution kernel becomes sharper and converges to the delta function, while as $p_0$ approaches $0.5$, the function flattens.

Now, let us verify a series of convoluted attention functions with $p_0 = 0.5$ to confirm that they are indeed monotonic. It can be easily demonstrated that any monotonically decreasing attention function can be decomposed into a sum of functions $\attention_j$, where $\attention_j(j')=\indd{j' \leq j}$. Let the associated convoluted attention function with the convolution kernels depicted in Fig \ref{fig_conv_core} be $\attention^0_j$. These functions are illustrated in Fig \ref{fig_atten_fun_50}. It is evident from the figure that the convoluted attention functions are all monotonically decreasing.

\begin{figure}[h]
\centering
\begin{minipage}{.5\textwidth}
  \centering
  \includegraphics[width=.9\linewidth]{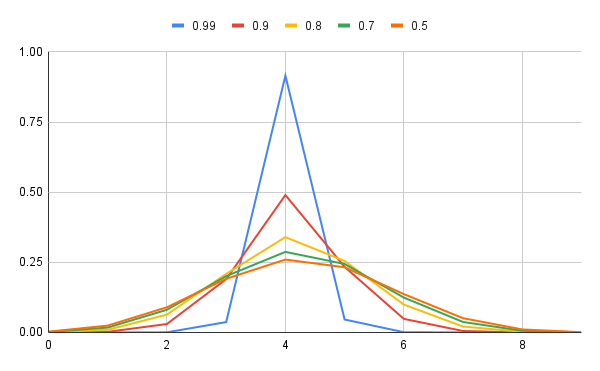}
  \caption{Convolution Kernel Examples}\label{fig_conv_core_j4}
\end{minipage}%
\begin{minipage}{.5\textwidth}
  \centering
  \includegraphics[width=.9\linewidth]{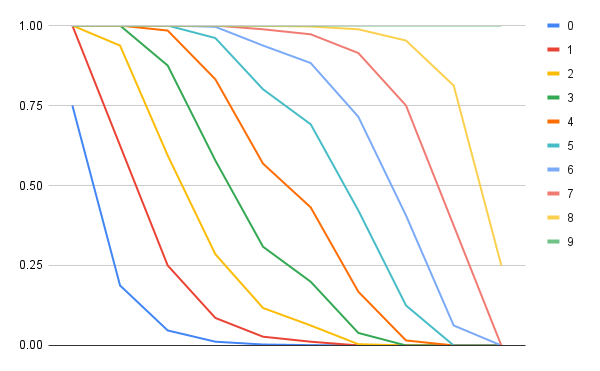}
  \caption{Convoluted Attention Functions}\label{fig_atten_fun_50}
\end{minipage}
\end{figure}

\subsection{Simulation Example}\label{subsec:example_inconsist}

Now we illustrate how the existing counterfactual interleaving designs from Section~\ref{subsec:facebook} can introduce biases in the producer-side experiment readouts and lead to misleading conclusions. For the sake of simplicity, let us consider only four producer items, i.e., $x_0, \dots, x_3$, such that $\ranker_0(x_j) = j$, and their ordering under $\ranker_1$ is $x_1, x_2, x_3, x_0$.
Assume the pure (intrinsic utility) metrics $\pureutil$ for producer item $\left\{ x_i \right\}$ are $\pureutil(x_0) = \pureutil(x_3) = 0.9$ and $\pureutil(x_1) = \pureutil(x_2) = 1$. Clearly, since $\pureutil(x_1) = \pureutil(x_2) > \pureutil(x_3) = \pureutil(x_0)$, the treatment ranker $\ranker_1$ is superior to the control ranker $\ranker_0$.

Existing counterfactual interleaving designs from Section~\ref{subsec:facebook} (such as the \unicorn \ design from \cite{nandy2021unicorn}) employ a naive strategy to resolve merging conflicts  with tie-breaking probability $\beta_0 = 0.5$, which means that if $\ranker(x) = \ranker(y)$ but $x \neq y$, the merged ranker $\ranker$ will place either $x$ or $y$ first with equal probability.
Suppose the attention function $\attention$ has the following form: $\attention(0) = \attention(1) = 1$ and $\attention(2) = \attention(3) = 0$, which means that the top two spots in the recommender system receive full attention from the consumers, whereas the bottom two do not have consumers' attention.

We simulate $N$ independent replications of the experiment, denoted as $\mathcal{S}$. In each replication $s \in \mathcal{S}$, four distinct producer items, $x_{j,s}$, are recommended to the consumer $v_s$, where the pure metric function $\pureutil$, the attention function $\attention$, as well as the rankers $\ranker_0$ and $\ranker_1$ (i.e., $\ranker_0(x_{j,s}) = j$), are the same across all replications. However, the randomization to assign each producer item $x_{j,s}$ into the control group $\prodset_0$ or the treatment group $\prodset_1$ with probability $p_0 = \prob{x_{j,s} \in \prodset_0}$, as well as the randomization to resolve merging conflicts with tie-breaking probability $\beta_0$, are all independent across different replications. 
For any $s \in \mathcal{S}$, the merged ranker $\ranker$ defines the final position of each $x_{j,s}$. 

For each $s \in \mathcal{S}$, we can calculate the experiment readout for the control and treatment groups ($k = 0, 1$) under the merged ranker $\ranker$ using equation~(\ref{equ:aggregate_metrics}):\begin{equation}\label{equ:uks}
	\agg_{k,s}(\util) = \frac{1}{p_k}\sum_{j}\indd{x_{k,s}\in\prodset_k}\pureutil(x_{j,s})\attention\circ\ranker(x_{j,s}),
\end{equation}
where $1/p_k$ is the normalization factor to account for the different sizes of treatment groups (i.e., $p_k$ represents the \% of traffic allocated to the treatment group $k$). Finally, we further take averages of the experiment readouts across $N$ replications for each $k = 0, 1$:
\begin{equation}
	\overline{\agg_k} = \frac{1}{N}\sum_{s\in\mcs} \agg_{k,s}(\util).
\end{equation}
We can also estimate the variance of $\agg_{k,s}(\util)$ by:
\begin{equation}\label{equ:uks_var_est}
	V(\agg_{k,s}(\util)) = \frac{1}{N-1}\left( \sum_{s\in\mcs} \agg^2_{k,s}(\util) - N\overline{\agg}_k^2 \right),
\end{equation}
and estimate the standard deviation of $\overline{\agg_k}$ as:
\begin{equation}
	SD_k = \frac{1}{\sqrt{N}}V(\agg_{k,s}(\util))^{1/2}.
\end{equation}

Under this simulation setup, we expect the correct result as $\overline{\agg_1} > \overline{\agg_0}$, which indicates that $\ranker_1$ is a better ranker than $\ranker_0$.
In the following four simulated cases, Case 1 and 2 are based on the naive \unicorn \ design approach from Section~\ref{subsec:facebook}, which lead to wrong conclusions; Case 3 and 4 leverage the proposed counterfactual interleaving design with consistent merged ranker $\ranker$ from Section~\ref{sec:new_principles}, which could give the correct conclusions.

\textbf{Case 1:} When the traffic allocation to the control group $\prodset_0$ is $p_0 = 0.9$ and a naïve tie-breaker with $\beta_0 = 0.5$ is employed as in the existing \unicorn \ design (Section~\ref{subsec:facebook}), the simulation results for $N = 10^3, 10^4, 10^5$ are as follows:

\begin{center}
\begin{tabular}{ |c|c|c|c| }
	\hline
	$N$ & $10^3$ & $10^4$ & $10^5$ \\ 
	\hline
	$\overline{\agg_0}$ & $1.9609$  & $1.9447$  & $1.9502$     \\  
	\hline
	$\overline{\agg_1}$ & $1.443$  & $1.5973$  & $1.5438$      \\  
	\hline
	$SD_0$ & $0.0126$  & $0.0041$  & $0.0013$      \\  
	\hline
	$SD_1$ & $0.1133$  & $0.0372$  & $0.012$ \\      
	\hline
\end{tabular}
\end{center}

It shows that $\overline{\agg_0} > \overline{\agg_1}$ with high statistical significance, which is misleading. As we have explained earlier, $\ranker_1$ is actually superior to $\ranker_0$.

In this case, the convoluted functions $\attention^0$ and $\attention^1$ can be represented in the following table:
\begin{center}
\begin{tabular}{ |c|c|c|c|c| }
	\hline
	$j$ & $0$ & $1$ & $2$ & $3$  \\ 
	\hline
	$\attention^0(j)$ & $1$ & $0.955$ & $0.095$ & $0$    \\  
	\hline
	$\attention^1(j)$ & $1$ & $0.505$ & $0.045$ & $0$    \\  
	\hline
\end{tabular}
\end{center}
Using these values, we can also directly compute the expected value of the aggregated experiment readouts $\overline{\agg_k}(\util)$ as defined in equation \eqref{equ:agg_k_bar}. By Theorem \ref{thm:conv_attn}, for $k=0, 1$
\begin{equation*}
	\overline{\agg_k}(\util) = \sum_{j=0}^3 \attention^k(j)\pureutil(x_j),
\end{equation*}
and $\overline{\agg_0}(\util) = 1.95$, $\overline{\agg_1}(\util) = 1.5455$. Such results clearly show that, due to the biases introduced by the inconsistent merged ranker, the aggregated experiment readouts incorrectly favor $\ranker_0$ over the actually superior $\ranker_1$.

One might speculate that such bias would not occur if the traffic was equally distributed between the control and treatment groups, i.e., $p_0 = 50\%$, $p_1 = 50\%$. However, in the next case below, we will show that even in this symmetric scenario, the bias persists.

\textbf{Case 2:} Consider the traffic allocation to the control group $\prodset_0$ is $p_0 = 0.5$ and a naïve tie-breaker with $\beta_0 = 0.5$ is employed as in the \unicorn \ approach (Section~\ref{subsec:facebook}). The simulation results are:

\begin{center}
\begin{tabular}{ |c|c|c|c| }
	\hline
	$N$ & $10^3$ & $10^4$ & $10^5$ \\ 
	\hline
	$\overline{\agg_0}$ & $2.081$  & $2.139$  & $2.155$     \\  
	\hline
	$\overline{\agg_1}$ & $1.808$  & $1.749$  & $1.733$      \\  
	\hline
	$SD_0$ & $0.04$  & $0.013$  & $0.004$      \\  
	\hline
	$SD_1$ & $0.04$  & $0.013$  & $0.004$ \\      
	\hline
\end{tabular}
\end{center}
In this csae, the convoluted attention functions can be calculated as follows:
\begin{center}
\begin{tabular}{ |c|c|c|c|c| }
	\hline
	$j$ & $0$ & $1$ & $2$ & $3$  \\ 
	\hline
	$\attention^0(j)$ & $1$ & $0.875$ & $0.375$ & $0$    \\  
	\hline
	$\attention^1(j)$ & $1$ & $0.625$ & $0.125$ & $0$    \\  
	\hline
\end{tabular}
\end{center}
and consequently, $\overline{\agg_0}(\util) = 2.15$, $\overline{\agg_1}(\util) = 1.7375$. Once again, the bias is clearly evident and the experiment readouts incorrectly indicate that $\ranker_0$ is superior to $\ranker_1$ with high significance..

In order to achieve a valid comparison between the treatment and control rankers, we should use the proposed counterfactual interleaving design from Section~\ref{sec:new_principles}, which is guaranteed to yield a consistent merged ranker. In the next two cases, we will show that the proposed consistent ranker can effectively identify the superior ranker in the simulated producer-side experiment.

\textbf{Case 3:} When the traffic allocation to the control group $\prodset_0$ is $p_0 = 0.9$ and the proposed counterfactual interleaving design with consistent merged ranker $\ranker$ from Section~\ref{sec:new_principles} is used, we have

\begin{center}
\begin{tabular}{ |c|c|c|c| }
	\hline
	$N$ & $10^3$ & $10^4$ & $10^5$  \\
	\hline
	$\overline{\agg_0}$ & $1.880$  & $1.893$  & $1.900$     \\  
	\hline
	$\overline{\agg_1}$ & $2.161$  & $2.06$  & $2.000$      \\  
	\hline
	$SD_0$ & $0.015$  & $0.0045$  & $0.0014$      \\  
	\hline
	$SD_1$ & $0.13$  & $0.041$  & $0.013$ \\      
	\hline
\end{tabular}
\end{center}
Based on the consistent ranker, we can clearly see that $\overline{\agg_1} > \overline{\agg_0}$ and $\ranker_1$ is identified as the optimal ranker.

\textbf{Case 4:} Consider the traffic allocation to the control group $\prodset_0$ is $p_0 = 0.5$ and the proposed counterfactual interleaving design with consistent merged ranker $\ranker$ from Section~\ref{sec:new_principles} is used, we have:

\begin{center}
\begin{tabular}{ |c|c|c|c| }
	\hline
	$N$ & $10^3$ & $10^4$ & $10^5$ \\ 
	\hline
	$\overline{\agg_0}$ & $1.956$  & $1.888$  & $1.904$     \\  
	\hline
	$\overline{\agg_1}$ & $1.920$  & $1.987$  & $1.971$      \\  
	\hline
	$SD_0$ & $0.04$  & $0.012$  & $0.004$      \\  
	\hline
	$SD_1$ & $0.04$  & $0.012$  & $0.004$ \\      
	\hline
\end{tabular}
\end{center}
Once again, different from the existing \unicorn \ design in Section~\ref{subsec:facebook}, the proposed solution from Section~\ref{sec:new_principles} can draw the correct conclusion.

\subsection{Recommender System Example from Online Social Networks} \label{sec:PYMKexample}

Online social network platforms play a crucial role in connecting people with one another, offering features such as Feeds and People You May Know (PYMK). A key aspect of these platforms is their ability to recommend a ranked list of users/creators for viewers to follow or connect with. Such recommender systems are vital in shaping the experience of both viewers and creators on the social network, and any new changes to the recommender's ranking algorithm need to be carefully evaluated through online experiments before getting fully deployed in production.
In this context, viewers are the consumers who consume content on the network, and standard viewer-side A/B tests can be used to assess the impact of ranking changes on the viewers' behavior. On the other hand, producers on the network are the creators who are ranked and recommended by the platform for viewers to connect or follow. Measuring the impact of ranking changes on the creators through producer-side experiment is equally important for improving the ecosystem of online social networks. Such producer-side experiments are also often referred to as the \textit{creator-side experiments}. 

The AI team at LinkedIn initially implemented the UniCoRn design (Section~\ref{subsec:facebook}) for running creator-side experiments in the online edge recommender system, serving tens of millions of members and providing billions of edge recommendations daily. However, it became evident that the UniCoRn-based approach led to biased creator positions in the final ranking, and readouts from the corresponding creator-side experiments were difficult to interpret.
After recognizing the importance of consistency and monotonicity principles in designing creator-side experiments, the team implemented the new counterfactual interleaving design as proposed in Section~\ref{sec:new_principles}. 

\begin{figure}[h]
  \centering
  \includegraphics[width=160mm]{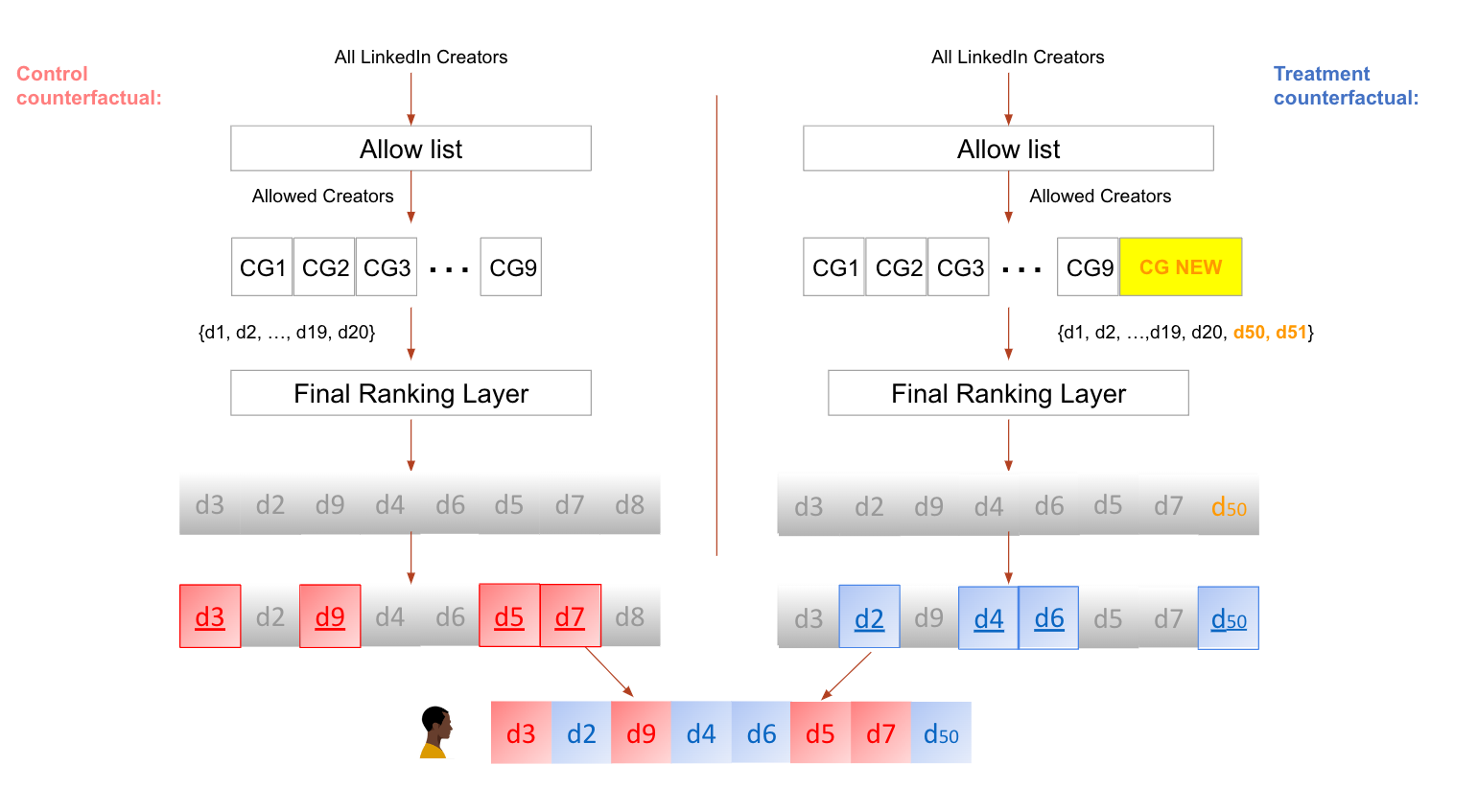}
  \caption{Illustration of a Counterfactual Interleaving Design for LinkedIn's Online Edge Recommender System}\label{fig_example}
\end{figure}

As shown in Figure~\ref{fig_example}, the ranker in the online edge recommender system at LinkedIn consists of three sequential ranking/filtering layers with increasing complexities: (1) \textit{Allow List} is a simple rule-based filtering layer, which returns a subset of creators who are eligible to be recommended to the viewer. (2) \textit{Candidate Generator} (CG) layer scores and ranks all creators in the allow list, and returns the top $m$ candidates. There can be multiple independent CGs in this layer (e.g., one CG for each country or industry segment) and union of all the selected candidates will be sent to the final ranking layer. The ranking models in the CG layer are generally easier to compute and thus they can be used to score and rank a large number of creators in the allow list. (3) \textit{Final Ranking} layer scores and ranks all the candidates selected by CG layer using more sophisticated models. In the end, top $n$ ($n < m$) creators out of the final ranking is shown to the viewer. Figure~\ref{fig_example} illustrates the counterfactual interleaving design of a creator-side experiment at LinkedIn where the treatment is to add a new CG to the existing CGs (e.g., CG1, CG2, …, CG9). 

Generating the treatment and control counterfactual rankings for the counterfactual interleaving design may not always require running the Allow List, CG and Final Ranking layers twice. In some cases, it is possible to develop computational shortcuts. Take the counterfactual interleaving design in Figure~\ref{fig_example} for example. 
Because the candidates generated in the control CG is a subset of those candidates generated in the treatment CG (while Allow List and Final Ranking layers are the same between treatment and control), the creator-side experiment only needs to run Allow List, CG and Final Scoring/Ranking for the treatment counterfactual case. Then, as shown in Figure~\ref{fig_shortcut}, the control counterfactual ranking can be directly obtained based on the treatment counterfactual ranking (after removing any candidates that were in the treatment CG but not in the control CG). In other words, the only extra computation needed for generating the control counterfactual ranking is to read a few more creators from the treatment counterfactual ranking list to fill the empty spots at the end of the control counterfactual ranking list. This shortcut can substantially reduce the computation especially when scoring and ranking a large number of creators are expensive. 

\begin{figure}[h]
  \centering
  \includegraphics[width=160mm]{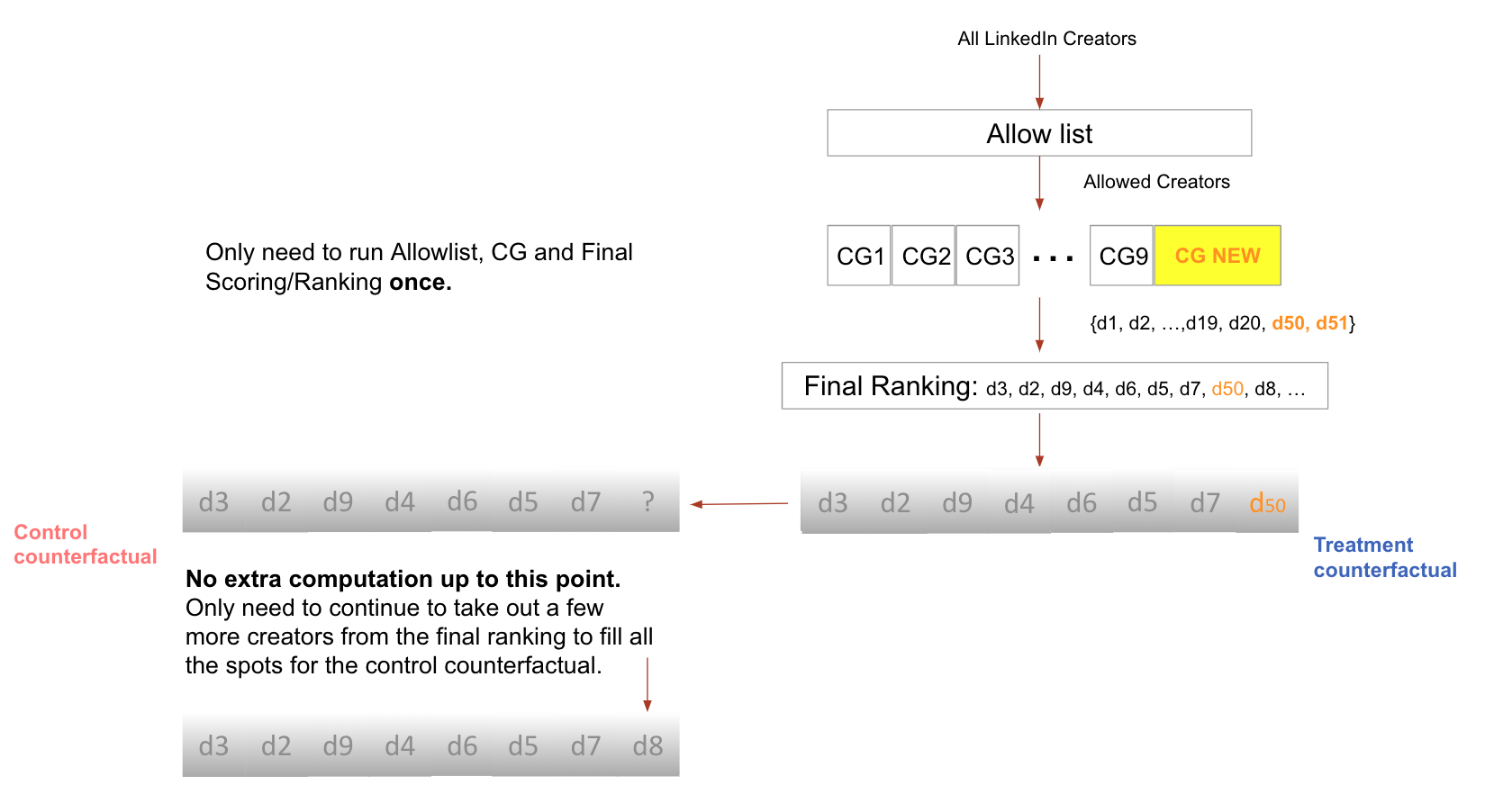}
  \caption{Illustration of the Computational Shortcut for Generating the Treatment and Control Counterfactual Rankings}\label{fig_shortcut}
\end{figure}

After deploying the proposed solution in production, a comparison between the new counterfactual interleaving design and the previous UniCoRn design has revealed that the UniCoRn-based design had resulted in approximately 85\% of creators being placed in the wrong positions. On average, every creator was randomly shifted away by $\pm$ 2 or 3 positions in a recommendation session compared to their correct positions based on the counterfactual rankings. 
Using the proposed new counterfactual interleaving design, the team was able to obtain fair comparisons between the treatment and control rankers, ensuring trustworthy creator-side experiment readouts. 
By running both viewer-side (consumer-side) and creator-side (producer-side) experiments and evaluating the ranking changes' impacts on both sides of the marketplace, the team can strike a balance that benefits all stakeholders involved. As recommender systems continue to evolve, these considerations will play an increasingly pivotal role in enhancing user experiences and driving success in online platforms.

\section{Conclusions} \label{sec:conclusions}

Many online platforms are two-sided marketplaces which have producers (e.g., sellers, content creators, hosts) on one side and consumers (e.g., buyers, content viewers, and customers) on the other side. Recommender systems aim to predict consumer preferences and allocate more preferred producer items to spots where consumers are likely to pay greater attention. To optimize an online recommender system, it is critical to conduct online experiments to thoroughly evaluate the impacts of any new ranking model changes on both sides. While consumer-side impact can be easily measured via simple online A/B testing, producer-side measurement is much more challenging. 
In this paper, we scrutinize issues of the current ad hoc design solutions in the literature and propose general principles for designing trustworthy online producer-side experiments. Building upon the proposed consistency and monotonicity principles,  we also derive a rigorous counterfactual interleaving design solution to ensure valid comparisons between treatment and control rankers. The proposed methodology and design principles can serve as guidelines for online platforms seeking to improve their recommender systems and ensuring accuracy in their evaluations on the producer side. 

In the end, we also want to note that an alternative way to measure producer-side impacts is through cluster-randomized experiments \citep{karrer_cluster_2021,  saveski_cluster_2017, saintjacques_egocluster2019}, where consumers are partitioned into various disjoint clusters and each cluster is associated with one producer. Such solution for measuring producer-side effects has two major limitations: (1) the effective sample size (and hence the power) of cluster-randomized experiments tends to be small; and (2) it is often challenging to partition the network into clusters and different clustering algorithms can lead to different experiment results. Moreover, in some applications (such as the online edge recommender system described in Section~\ref{sec:PYMKexample}), it is not possible to run cluster-randomized experiments because the new treatment in the experiment would keep changing the edge structure of the online social network.  

\section{Acknowledgements}

The authors would like to thank Nian Si, Preetam Nandy, Weitao Duan, James Sorenson, Cindy Liang, Parag Agrawal, Andrew Hatch, Chun Lo, Yafei Wei, Liyan Fang, Wentao Su and Wanjun Liu for their suggestions and feedbacks. The authors also would like to thank the researchers and engineers from the Data Science Applied Research team, Follows AI team and PYMK AI team at LinkedIn.

\appendix

\section{Proof of Theorem \ref{thm:conv_attn}}\label{app:proof_thm_conv_attn}

We first calculate the following sum of conditional expectations:

\begin{align} \label{equ:proof_thm_conv_attn_part0}
  \begin{split}
     & \sum_{d\in \prodset }\expected[\einfo]{\util\left(d; \ranker\right)\vert d\in\prodset_k} \\
     = & \sum_{j\in\spotset}\expected[\einfo]{\util\left(\spotassigner_k(j); \ranker\right)\vert \spotassigner_k(j)\in\prodset_k} \\
     = & \sum_{j\in\spotset}\pureutil\circ\spotassigner_k(j)\times\expected[\einfo]{\attention\circ\ranker\circ\spotassigner_k(j)\vert \spotassigner_k(j)\in\prodset_k} \\
     = & \sum_{j\in\spotset}\pureutil\circ\spotassigner_k(j)\times \attention^k(j)
  \end{split}
\end{align}

Now we prove Theorem \ref{thm:conv_attn} by calculating $p_k\overline{\agg_k}(\util)$: 
\begin{align} \label{equ:proof_thm_conv_attn_part1}
  \begin{split}
    & p_k\overline{\agg_k}(\util) \\
    = & \expected{\sum_{d\in \prodset_k} \util\left(d; \ranker\right)\bigg|\finfo} \\
    = & \expected{\sum_{d\in \prodset } \util\left(d; \ranker\right)\indd{d\in\prodset_k}\bigg|\finfo} \\
    = &\sum_{d\in \prodset }\expected{\util\left(d; \ranker\right)\indd{d\in\prodset_k}\bigg|\finfo}\\
    = &\sum_{d\in \prodset }\expected[\einfo]{\util\left(d; \ranker\right)\indd{d\in\prodset_k}}\ \text{(by independence of $\einfo$ and $\finfo$)}\\
    = & p_k\sum_{d\in \prodset }\expected[\einfo]{\util\left(d; \ranker\right)\vert d\in\prodset_k} \\
    = & p_k\sum_{j\in\spotset}\pureutil\circ\spotassigner_k(j)\times \attention^k(j)\ \text{(plugging in equation \eqref{equ:proof_thm_conv_attn_part0})} 
  \end{split}
\end{align}
and the proof is complete.

\section{Proof of Lemma \ref{lem:util_attn_line_up}} \label{app:proof_util_attn_line_up}

The crux of the proof lies in observing that, since the attention function $\attention$ is monotonically decreasing by definition, for any ranker $\ranker \neq \ranker_{\pureutil}$, there must exist producer items $d_0, d_1 \in \prodset$ such that $\pureutil(d_0) > \pureutil(d_1)$ but $\ranker(d_0) > \ranker(d_1)$, which implies that $\attention \circ \ranker(d_0) \leq \attention \circ \ranker(d_1)$. Consider a new ranker, $\tilde{\ranker}$, which is identical to $\ranker$ except that the ranks of $d_0$ and $d_1$ are swapped. Then,

  \begin{align*}
    \begin{split}
    & \sum_{d\in\prodset}\pureutil(d)\times\attention\circ\tilde\ranker(d) - \sum_{d\in\prodset}\pureutil(d)\times\attention\circ\ranker(d) \\
     = & \pureutil(d_0)\times\attention\circ\tilde\ranker(d_0) + \pureutil(d_1)\times\attention\circ\tilde\ranker(d_1) - \\
     & - \pureutil(d_0)\times\attention\circ\ranker(d_0) - \pureutil(d_1)\times\attention\circ\ranker(d_1) \\ 
     = & \pureutil(d_0)\times\attention\circ\ranker(d_1) + \pureutil(d_1)\times\attention\circ\ranker(d_0) - \\
     & - \pureutil(d_0)\times\attention\circ\ranker(d_0) - \pureutil(d_1)\times\attention\circ\ranker(d_1) \\  
     = & \left[ \pureutil(d_0) - \pureutil(d_1) \right]\times\left[ \attention\circ\ranker(d_1) - \attention\circ\ranker(d_0) \right]\\
     \ge & 0
    \end{split}
  \end{align*}

This inequality indicates that swapping the ranks of $d_0$ and $d_1$ leads to a non-decrease in the value of $\sum_{d \in \prodset} \pureutil(d) \times \attention \circ \ranker(d)$. By iteratively swapping such pairs $(d_0, d_1)$, $\ranker$ can eventually be transformed into $\ranker_{\pureutil}$. Throughout this process, the value of $\sum_{d \in \prodset} \pureutil(d) \times \attention \circ \ranker(d)$ never decreases, thereby establishing $\ranker_{\pureutil}$ as the maximizer.

\section{Proof of Theorem~\ref{thm:beta0}} \label{app:proof_thm_beta0}

Proof:
It is evident that $\ranker(x)$ is equal to one plus the number of producer items ranked ahead of $x$ by $\ranker$:

\begin{equation*}
\ranker(x) = 1 + \sum_{d\neq x}\indd{\ranker(d)<\ranker(x)}.
\end{equation*}

Considering that $\ranker$ maintains the relative order of $\ranker_*$, and $d = y$ is the sole producer item (apart from $x$) for which $\ranker_*(d) = \ranker_*(x)$, under the condition $x\in\prodset_0$ (or equivalently, $\ranker_*(x)=j$),

\begin{equation}\label{equ:Rd0Decompose}
	\begin{split}
		\ranker(x)\Big|_{x\in\prodset_0} & = 1 + \sum_{d\not\in\left\{ x, y \right\}}\indd{\ranker(d)<\ranker(x)} + \indd{\ranker(y) < \ranker(x)}\Big|_{x\in\prodset_0} \\
	& = 1 + \sum_{d\not\in\left\{ x, y \right\}}\indd{\ranker_*(d)<j}\Big|_{x\in\prodset_0} + \indd{\ranker(y) < \ranker(x)}\Big|_{x\in\prodset_0} \\
	& = 1 + O^0_j + Q^0_j,
	\end{split}
\end{equation}

where
\begin{equation}\label{equ:XjDef}
	O_j\define\sum_{d\not\in\left\{ x, y \right\}}\indd{\ranker_*(d)<j}
\end{equation}
and $O^0_j = O_j\vert_{x\in\prodset_0}$. Additionally 
\begin{equation}\label{equ:delta0jDef}
	Q^0_j\define\indd{\ranker(y) < \ranker(x)}\vert_{x\in\prodset_0}.
\end{equation}

Likewise, under the condition $y\in\prodset_1$,

\begin{equation}\label{equ:Rd1Decompose}
	\ranker(y)\Big|_{y\in\prodset_1}  = 1 + O^1_j + Q^1_j
\end{equation}
with $O^1_j = O_j\vert_{y\in\prodset_1}$ and $Q^1_j = \indd{\ranker(x) < \ranker(y)}\vert_{y\in\prodset_1}$. 

The variation in the variable $O_j$ is influenced by the treatment allocations of producer items excluding $x$ and $y$. On the other hand, the variation in the terms $Q^0_j$ and $Q^1_j$ is governed by the allocations of $x$, $y$, as well as the random generator, $\randgen$, responsible for tie-breakings. Therefore, for $k=0, 1$, $O^k_j$ and $Q^k_j$ are independent. Furthermore, $O^0_j$ and $O^1_j$ possess identical distributions, as both are identical to the distribution of $O_j$.

For $\ranker(x)\vert_{x\in\prodset_0}$ and $\ranker(y)\vert_{y\in\prodset_1}$ to exhibit the same distribution (in order to satisfy the Principle of Consistency), we need to ensure that $Q^0_j$ and $Q^1_j$ have identical distributions. Based on the definitions of each term, this implies
\begin{equation}\label{equ:delta01j_identically_distributed}
	\prob{\ranker(y) < \ranker(x)\vert x\in\prodset_0} = \prob{\ranker(x) < \ranker(y)\vert y\in\prodset_1}.  
\end{equation}
For the left-hand side:
\begin{align*}
	\begin{split}
	& \prob{\ranker(y) < \ranker(x)\vert x\in\prodset_0} \\ 
	 =\ & p_0\prob{\ranker(y) < \ranker(x)\vert x\in\prodset_0, y\in\prodset_0} \\ 
	& + p_1\prob{\ranker(y) < \ranker(x)\vert x\in\prodset_0, y\in\prodset_1} \\
	 =\ & p_0\indd{\ranker_0(y)<j} + p_1(1-\beta_0),
	\end{split}
\end{align*}
and similarly for the right-hand side:
\begin{align*}
	 \prob{\ranker(x) < \ranker(y)\vert y\in\prodset_1}  =  p_1\indd{\ranker_1(x)<j} + p_0\beta_0,
\end{align*}
where $p_0 = \frac{|\prodset_0|}{|\prodset_0|+|\prodset_1|}$ representing the \% of traffic allocated to the control group and $p_1=\frac{|\prodset_1|}{|\prodset_0|+|\prodset_1|}$ representing the \% of traffic allocated to the treatment group.

By equalizing the above two sides, we can obtain $\beta_0$ under the Principle of Consistency as:

\begin{equation}
  \label{equ:beta_0_LHS_RHS_equal_app}
  \beta_0 = 
    \begin{cases}
	    p_1\ \textsf{if}\ \ranker_1(x) > j \textsf{ and } \ranker_0(y) > j, \\
	    p_0\ \textsf{if}\ \ranker_1(x) < j \textsf{ and } \ranker_0(y) < j, \\
	    1\ \textsf{if}\ \ranker_1(x) > j \textsf{ and } \ranker_0(y) < j, \\
	    0\ \textsf{if}\ \ranker_1(x) < j \textsf{ and } \ranker_0(y) > j. 
    \end{cases}
\end{equation}

This definition ensures that both $Q^0_j$ and $Q^1_j$ follow Bernoulli distributions with identical expected values as shown below:

\begin{equation}
  \label{equ:expected_value_delta_j}
  \expected{Q^0_j} = 
    \begin{cases}
	    p_0p_1\ \textsf{if}\ \ranker_1(x) > j \textsf{ and } \ranker_0(y) > j, \\
	    1 - p_0p_1\ \textsf{if}\ \ranker_1(x) < j \textsf{ and } \ranker_0(y) < j, \\
	    p_0\ \textsf{if}\ \ranker_1(x) > j \textsf{ and } \ranker_0(y) < j, \\
	    p_1\ \textsf{if}\ \ranker_1(x) < j \textsf{ and } \ranker_0(y) > j. 
    \end{cases}
\end{equation}

\section{Proof of Theorem \ref{thm:ntreat1_unicorn_monotonic}}
\label{app:proof_theorem_ntreat1_unicorn_monotonic}
We start the proof of Theorem $\ref{thm:ntreat1_unicorn_monotonic}$ with the following straightforward yet essential results.

\begin{lemma}
\label{lem:sumOfIndVariables}
Let $X$, $Y_1$, and $Y_2$ be random variables, where $X$ and $Y_i$ are independent for $i = 1, 2$. If $Y_1 \succ Y_2$, then $X + Y_1 \succ X + Y_2$.
\end{lemma}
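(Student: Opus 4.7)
The plan is to unpack the definition of $\succ$ given earlier in the paper and then reduce the claim to a pointwise inequality between CDFs that is preserved under taking a mixture. Recall the definition stated just before Definition \ref{def:monotocity}: $\pi_1 \prec \pi_2$ means $F_{\pi_1}(z) \geq F_{\pi_2}(z)$ for every real $z$. So the hypothesis $Y_1 \succ Y_2$ translates to $F_{Y_1}(z) \leq F_{Y_2}(z)$ for all $z \in \mathbb{R}$, and the goal is to establish the same pointwise inequality $F_{X+Y_1}(x) \leq F_{X+Y_2}(x)$ for all $x$.

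First I would fix an arbitrary $x \in \mathbb{R}$ and condition on $X$. By independence of $X$ and $Y_i$, the conditional distribution of $X + Y_i$ given $X = t$ is just the distribution of $t + Y_i$, so
\$
F_{X+Y_i}(x) \;=\; \mathbb{P}(X + Y_i \leq x) \;=\; \int F_{Y_i}(x - t) \, dF_X(t),
\$
for $i = 1, 2$. The integrands are well-defined measurable functions of $t$ taking values in $[0,1]$, so everything is integrable with respect to $F_X$.

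Second, I would apply the pointwise hypothesis $F_{Y_1}(z) \leq F_{Y_2}(z)$ with the substitution $z = x - t$, obtaining $F_{Y_1}(x - t) \leq F_{Y_2}(x - t)$ for every $t$. Monotonicity of the Lebesgue–Stieltjes integral then gives
\$
F_{X+Y_1}(x) \;=\; \int F_{Y_1}(x - t)\, dF_X(t) \;\leq\; \int F_{Y_2}(x - t)\, dF_X(t) \;=\; F_{X+Y_2}(x).
\$
Since $x$ was arbitrary, this yields $X + Y_1 \succ X + Y_2$ by the definition of $\succ$.

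There is essentially no hard step: the one place to be careful is the use of independence to write the conditional CDF as $F_{Y_i}(x-t)$, which is what allows the pointwise inequality in $z$ to be reused after shifting by $-t$. If one prefers to avoid measure-theoretic notation, the same argument can be presented via Fubini/Tonelli, writing $\mathbb{P}(X + Y_i \leq x) = \mathbb{E}[F_{Y_i}(x - X)]$ and then comparing the two expectations termwise — this is the form I would choose for the final write-up, as it is the shortest and requires only the definition of $\succ$ plus independence.
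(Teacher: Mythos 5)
Your proof is correct and follows essentially the same route as the paper's: condition on $X$, use independence to write $\mathbb{P}(X+Y_i \leq x) = \mathbb{E}[F_{Y_i}(x-X)]$, and compare the integrands pointwise via the hypothesis $F_{Y_1} \leq F_{Y_2}$. The paper phrases this with survival functions $\prob{Y_i \geq t - x}$ rather than CDFs, but the argument is identical.
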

\begin{proof}
Consider any number $t$. It suffices to show that $\prob{X + Y_1 \geq t} \geq \prob{X + Y_2 \geq t}$.
\begin{equation*}
	\begin{split}
		\prob{X + Y_1 \geq t} & = \expected{\prob{Y_1 \geq t-x}\vert_{x = X}} \\
		& \geq \expected{\prob{Y_2 \geq t-x}\vert_{x=X}} \\
		& = \prob{X + Y_2 \geq t}
	\end{split}
\end{equation*}
\end{proof}

\begin{corollary}\label{cor:sumOfIndVariables}
Let $X_1$, $X_2$, $Y_1$, and $Y_2$ be random variables, where $X_i$ and $Y_i$ are independent for $i = 1, 2$. If $X_1 \succ X_2$ and $Y_1 \succ Y_2$, then $X_1 + Y_1 \succ X_2 + Y_2$.
\end{corollary}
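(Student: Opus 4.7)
The plan is to reduce Corollary~\ref{cor:sumOfIndVariables} to two applications of Lemma~\ref{lem:sumOfIndVariables} chained together by transitivity of the stochastic order $\succ$. The key preliminary observation is that the comparison $X_1 + Y_1 \succ X_2 + Y_2$ depends only on the marginal distributions of the two sums, and since $X_i$ is independent of $Y_i$ for each $i$ by hypothesis, each sum's law is the convolution of its summands' marginal laws. Accordingly, I would first pass to a common probability space on which independent copies of $X_1, X_2, Y_1, Y_2$ with the original marginals are mutually (jointly) independent. This re-coupling preserves the distributions of $X_1 + Y_1$ and $X_2 + Y_2$ individually and therefore does not affect the stochastic comparison to be proved.

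With mutual independence in hand, the natural bridge is the auxiliary sum $X_1 + Y_2$. First I would invoke Lemma~\ref{lem:sumOfIndVariables} with $X_1$ playing the role of the common summand and the ordered pair $Y_1 \succ Y_2$, obtaining $X_1 + Y_1 \succ X_1 + Y_2$; mutual independence guarantees the lemma's hypothesis that $X_1$ is independent of $Y_1$ and of $Y_2$. A second application, now with $Y_2$ in the role of the common summand and the ordered pair $X_1 \succ X_2$, gives $X_1 + Y_2 \succ X_2 + Y_2$. Transitivity of $\succ$ is immediate from its pointwise definition via CDFs: chaining $F_{X_1+Y_1}(t) \leq F_{X_1+Y_2}(t) \leq F_{X_2+Y_2}(t)$ for every real $t$ yields $X_1 + Y_1 \succ X_2 + Y_2$.

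The only mild obstacle is noticing that the stated hypotheses provide within-pair independence $X_i \perp Y_i$ but not the cross-pair independence $X_1 \perp Y_2$ and $X_2 \perp Y_1$ that the two-step cascade ostensibly needs in order to apply Lemma~\ref{lem:sumOfIndVariables}. The coupling remark above dispenses with this cleanly because the corollary's conclusion is a statement purely about marginals. After that, the argument is a routine transitivity cascade and I do not anticipate any further difficulty.
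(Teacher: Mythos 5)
Your proof is correct and follows essentially the same route as the paper's: bridge through the auxiliary sum $X_1 + Y_2$ via two applications of Lemma~\ref{lem:sumOfIndVariables} and conclude by transitivity of $\succ$. Your coupling remark is a welcome extra bit of care --- the stated hypotheses give only within-pair independence, not the cross-pair independence $X_1 \perp Y_2$ that the chaining needs, a point the paper's one-line proof passes over silently --- but it does not change the underlying argument.
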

\begin{proof}
Leveraging Lemma \ref{lem:sumOfIndVariables}, we can deduce that $X_1 + Y_1 \succ X_1 + Y_2$ and $X_1 + Y_2 \succ X_2 + Y_2$. Consequently, $X_1 + Y_1 \succ X_2 + Y_2$.
\end{proof}

To streamline the discussion, let's introduce a set of notations. Let $X$ and $Y$ be two random variables, which might not necessarily be independent. Define $X\oplus Y$ as a random variable $Z$ such that $Z = X' + Y'$, where $X'$ and $Y'$ are identically distributed as $X$ and $Y$, respectively, and are also independent.

With Corollary \ref{cor:sumOfIndVariables} in mind, we can deduce the following corollary:

\begin{corollary}\label{cor:directSumOfRV}
Let $X_1$, $X_2$, $Y_1$, and $Y_2$ be random variables. If $X_1 \succ X_2$ and $Y_1 \succ Y_2$, then $X_1 \oplus Y_1 \succ X_2 + Y_2$.
\end{corollary}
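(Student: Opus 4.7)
The plan is to reduce the corollary directly to Corollary \ref{cor:sumOfIndVariables} by unpacking the definition of $\oplus$. By construction, $X_1 \oplus Y_1$ can be realized as $X_1' + Y_1'$ on a probability space carrying independent $X_1' \sim X_1$ and $Y_1' \sim Y_1$. I will read the right-hand side $X_2 + Y_2$ in the only way consistent with the $\oplus$ notation used throughout this appendix, namely as $X_2' + Y_2'$ with independent copies $X_2' \sim X_2$ and $Y_2' \sim Y_2$; allowing an arbitrary joint on $(X_2, Y_2)$ would let $X_2 + Y_2$ concentrate on extreme values (the comonotone case) and defeat any independent sum on the left, so some form of independence on the right is necessary for the conclusion to hold.

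Granting that reading, the argument is essentially one line. Because stochastic dominance $\succ$ depends only on marginal distributions, the hypotheses $X_1 \succ X_2$ and $Y_1 \succ Y_2$ transfer verbatim to $X_1' \succ X_2'$ and $Y_1' \succ Y_2'$. I then apply Corollary \ref{cor:sumOfIndVariables} to the two independent pairs $(X_1', Y_1')$ and $(X_2', Y_2')$ to obtain $X_1' + Y_1' \succ X_2' + Y_2'$, which is exactly $X_1 \oplus Y_1 \succ X_2 \oplus Y_2$ re-encoded in $\oplus$-notation.

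There is no substantive obstacle. Lemma \ref{lem:sumOfIndVariables} does the real work by conditioning on $X$ and noting that $y \mapsto \prob{Y_i \geq t - y}$ is a monotone test function, and Corollary \ref{cor:sumOfIndVariables} chains this conclusion across the two coordinates. The present corollary merely repackages that result in the form needed for the monotonicity half of Theorem \ref{thm:ntreat1_unicorn_monotonic}, where the decomposition $1 + O^k_j + Q^k_j$ of $\ranker(x)|_{x\in\prodset_k}$ provides independent summands whose stochastic orderings from spot $j$ to spot $j+1$ can then be transferred to the sum.
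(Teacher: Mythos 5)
Your proof is correct and is essentially the paper's own argument: the paper states this corollary without a written proof, treating it as an immediate consequence of Corollary \ref{cor:sumOfIndVariables}, which is precisely the reduction you carry out (realize $X_1\oplus Y_1$ as an independent sum, note that $\succ$ depends only on marginals, and chain the two orderings). Your caveat about the right-hand side is also well taken and worth recording: read literally with an arbitrary joint law on $(X_2,Y_2)$ the statement fails --- for instance, with all four variables Bernoulli$(1/2)$ and $X_2=Y_2$, one has $F_{X_1\oplus Y_1}(1)=3/4 > 1/2 = F_{X_2+Y_2}(1)$, so $X_1\oplus Y_1\not\succ X_2+Y_2$ --- hence the conclusion must be read as $X_1\oplus Y_1\succ X_2\oplus Y_2$, which is harmless everywhere the corollary is invoked since the relevant summands ($O'_j$ and $Q^0_j$, and their $j+1$ counterparts) are independent by construction.
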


For $0 \leq p \leq 1$, let $\gamma(p)$ denote a random variable following a Bernoulli distribution with mean $p$. Specifically, $\gamma(0) = 0$ and $\gamma(1) = 1$. The following lemma is self-evident:

\begin{lemma}
	\label{lem:bernoulliOrder}
  Let $0\leq p \leq q \leq 1$, then $\gamma(q)\succ \gamma(p)$.
\end{lemma}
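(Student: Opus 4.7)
The plan is to unpack both sides of the stochastic-order relation $\succ$ explicitly in terms of cumulative distribution functions and verify the pointwise inequality directly. Recall the order introduced just before Definition~\ref{def:monotocity}: $\pi_1 \prec \pi_2$ means $F_{\pi_1}(x) \geq F_{\pi_2}(x)$ for every real $x$. So it suffices to write down the CDFs of $\gamma(p)$ and $\gamma(q)$ and compare them at every point.

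A Bernoulli variable with mean $p$ takes value $0$ with probability $1-p$ and value $1$ with probability $p$, giving the three-piece step CDF
\begin{equation*}
F_{\gamma(p)}(x) =
\begin{cases}
0 & x < 0,\\
1-p & 0 \leq x < 1,\\
1 & x \geq 1,
\end{cases}
\end{equation*}
and analogously for $F_{\gamma(q)}$ with $p$ replaced by $q$. On $(-\infty,0)$ both CDFs equal $0$ and on $[1,\infty)$ both equal $1$, so the required inequality $F_{\gamma(q)}(x) \leq F_{\gamma(p)}(x)$ holds with equality. On the remaining region $[0,1)$ the inequality reduces to $1-q \leq 1-p$, which is just the hypothesis $p \leq q$ rewritten. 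The degenerate endpoints $p=0$ and $q=1$ (where $\gamma(p)$ or $\gamma(q)$ becomes a point mass) fit into the same formulas and require no separate argument.

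There is no real obstacle: the lemma is a direct translation of the hypothesis $p \leq q$ through the definition of first-order stochastic dominance for two-point distributions. The only thing to be slightly careful about is the convention for the CDF at the jump points ($x=0$ and $x=1$), but since the definition of $\prec$ is a pointwise inequality that holds with equality at both jump points regardless of left- or right-continuity convention, nothing delicate arises. Combining the three cases yields $F_{\gamma(q)}(x) \leq F_{\gamma(p)}(x)$ for all $x$, which by definition is $\gamma(p) \prec \gamma(q)$, i.e., $\gamma(q) \succ \gamma(p)$.
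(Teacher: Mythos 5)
Your proof is correct: writing out the two-point step CDFs and checking the pointwise inequality $F_{\gamma(q)}(x)\leq F_{\gamma(p)}(x)$ is exactly the verification the paper has in mind, which it declares ``self-evident'' and omits entirely. Your write-up simply supplies that routine detail, and the handling of the jump points and degenerate cases is fine.
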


For any position $j$, let us denote by $x_j$, $y_j$ the producer items such that $\ranker_0(x_j) = \ranker_1(y_j) = j$. According to equation \eqref{equ:def_conv_core}, the convolution kernel $\convcore{k}{j}$ is essentially the distribution of $\ranker(x_j)$ when $x_j\in\prodset_0$ or $\ranker(y_j)$ when $y_j\in\prodset_1$. Both distributions are identical when $\ranker$ is the consistent merged ranker, and the tie-breakers are computed as described in Section \ref{sec:new_principles}.

It is noteworthy that $x_j$ and $y_j$ may be equal for certain $j$. In such cases, no tie-breaker is needed at that position because no conflicts arise. Let's define $\spotpotcands(j)$ as the set containing $x_j$ and $y_j$. When $x_j = y_j$, $\spotpotcands(j)$ will contain only one element.

By definition, $\ranker(x_j) = 1 + \sum_{d\in\prodset} \indd{\ranker(d) < \ranker(x_j)}$, which leads us to

\begin{equation}\label{equ:Rd0DecomposeV2}
	\begin{split}
		\ranker(x_j)\Big|_{x_j\in\prodset_0} & = 1 + \sum_{d\not\in\spotpotcands(j)}\indd{\ranker(d)<\ranker(x_j)} + \sum_{d\in\spotpotcands(j)}\indd{\ranker(d)<\ranker(x_j)}\Big|_{x_j\in\prodset_0} \\
	& = 1 + \sum_{d\not\in\spotpotcands(j)}\indd{\ranker_*(d)<j}\Big|_{x_j\in\prodset_0} + \sum_{d\in\spotpotcands(j)}\indd{\ranker(d)<\ranker(x_j)}\Big|_{x_j\in\prodset_0} \\
	& \define 1 + O^0_j + Q^0_j
	\end{split}
\end{equation}

Equation \eqref{equ:Rd0DecomposeV2} is consistent with equation \eqref{equ:Rd0Decompose}, but accounts for the case where $\sizeof{\spotpotcands(j)} = 1$.

As outlined in Appendix \ref{app:proof_thm_beta0}, the two terms $O^0_j$ and $Q^0_j$ are independent, and $O^0_j$ is distributed the same as

\begin{equation}\label{equ:XjDecompose2kdfjd}
	O_j = \sum_{d\not\in\spotpotcands(j)}\indd{\ranker_*(d)<j}.
\end{equation}

$O_{j}$ can be further decomposed into two independent components, both of which are also independent of $Q^0_j$:

\begin{equation}\label{equ:Oj}
  \begin{split}
	O_j & = \sum_{d\not\in\spotpotcands(j)\cup\spotpotcands(j+1)}\indd{\ranker_*(d)<j} + \sum_{d\in\spotpotcands(j+1)\setminus\spotpotcands(j)}\indd{\ranker_*(d)<j} \\
	& \define \hat{O}_j + O'_{j}.
  \end{split}
\end{equation}

Similarly for $O_{j+1}$:
\begin{equation}\label{equ:Ojplus1}
  \begin{split}
	O_{j+1} & = \sum_{d\not\in\spotpotcands(j)\cup\spotpotcands(j+1)}\indd{\ranker_*(d)<j+1} + \sum_{d\in\spotpotcands(j)\setminus\spotpotcands(j+1)}\indd{\ranker_*(d)<j+1} \\
	& \define \hat{O}_{j+1} + O''_{j+1}.
  \end{split}
\end{equation}

By definition, when $d$ is not an element of the set $\spotpotcands(j) \cup \spotpotcands(j+1)$, the indicator function $\indd{\ranker_*(d)<j}$ is equal to $\indd{\ranker_*(d)<j+1}$. As a result, $\hat{O}_j$ is equal to $\hat{O}_{j+1}$.

Let's introduce $T(j) = O'_j \oplus Q^0_j$ and $T(j+1) = O''_{j+1} \oplus Q^0_{j+1}$. According to Corollary \ref{cor:directSumOfRV}, in order to prove Theorem \ref{thm:ntreat1_unicorn_monotonic}, it is sufficient to demonstrate that for any integer $j$ greater than or equal to $1$, $T(j+1)$ stochastic dominates $T(j)$, that is, $T(j+1)\succ T(j)$.

Regarding $Q^0_j$, if the size of the set $\spotpotcands(j)$ is equal to $1$, then $Q^0_j$ is equal to $0$. On the other hand, if the size of the set $\spotpotcands(j)$ is equal to $2$, $Q^0_j$ follows a Bernoulli distribution with an expected value that can be computed using equation \eqref{equ:expected_value_delta_j}, by substituting $x$ and $y$ with $x_j$ and $y_j$, respectively.

For the sake of simplification, let's denote $\ixj$ as $\indd{\ranker_1(x_j)<j}$ and $\iyj$ as $\indd{\ranker_0(y_j)<j}$. It is important to note that both $\ixj$ and $\iyj$ are deterministic functions that can only take the values $0$ or $1$.

\begin{lemma} \label{lem:QjCj2}
For any $j$, if $\sizeof{\spotpotcands(j)} = 2$, then the distribution of $Q^0_j$ is determined by the values of $\ixj$ and $\iyj$ as below:
\begin{center}
\begin{tabular}{ |c|c|c| }
	\hline
	$\ixj$ & $\iyj$ & $Q^0_j$  \\ 
	\hline
 0 & 0 & $\gamma(p_0p_1)$    \\  
	\hline
 1 & 1 & $\gamma(1 - p_0p_1)$    \\  
	\hline
 0 & 1 & $\gamma(p_0)$    \\  
	\hline
 1 & 0 & $\gamma(p_1)$    \\  
	\hline
\end{tabular}
\end{center}
\end{lemma}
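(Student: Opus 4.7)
The plan is to directly compute the Bernoulli parameter $\prob{Q^0_j = 1}$ in each of the four rows of the table. From equation \eqref{equ:Rd0DecomposeV2}, since $\sizeof{\spotpotcands(j)} = 2$, the sum collapses to $Q^0_j = \indd{\ranker(y_j) < \ranker(x_j)}\vert_{x_j \in \prodset_0}$, a $\{0,1\}$-valued random variable, so only its mean must be identified.

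The key step is to decompose according to the random group assignment of $y_j$. Conditional on $x_j \in \prodset_0$, with probability $p_0$ one has $y_j \in \prodset_0$ and both items share the ranker $\ranker_0$, so the merged ranker $\ranker$ preserves their $\ranker_0$-order; hence $\ranker(y_j) < \ranker(x_j)$ iff $\ranker_0(y_j) < \ranker_0(x_j) = j$, equivalently iff $\iyj = 1$. With probability $p_1$ one has $y_j \in \prodset_1$, creating the merging conflict $\ranker_*(x_j) = \ranker_*(y_j) = j$, resolved by the tie-breaker that places $x_j$ before $y_j$ with probability $\beta_0$, so $\ranker(y_j) < \ranker(x_j)$ holds with probability $1 - \beta_0$. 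Combining the two contributions gives
\begin{equation*}
\prob{Q^0_j = 1} = p_0\, \iyj + p_1\, (1 - \beta_0).
\end{equation*}

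The remaining step is to substitute the four values of $\beta_0$ prescribed by equation \eqref{equ:beta_0_LHS_RHS_equal}, one per combination of $(\ixj, \iyj) \in \{0,1\}^2$, and simplify using $p_0 + p_1 = 1$ to match the claimed Bernoulli parameters $p_0 p_1$, $1 - p_0 p_1$, $p_0$, and $p_1$.

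I do not foresee any real obstacle here; the lemma is essentially bookkeeping once the correct conditioning is set up. The one point that deserves care is recognizing that when $y_j \in \prodset_0$ no tie-breaker randomness enters, because $x_j$ and $y_j$ lie in the same group and $\ranker$ inherits their order directly from $\ranker_0$; this is what produces the clean linear formula above rather than an expression involving $\beta_0$ in both summands.
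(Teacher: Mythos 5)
Your proposal is correct and follows essentially the same route as the paper: the paper's own proof is a one-line reduction to equation \eqref{equ:expected_value_delta_j}, and that equation was obtained in Section \ref{sec:solve_tie_breaker_consistency} by exactly the conditioning you describe (split on the group of $y_j$, use order-preservation of $\ranker_*$ when both items are in $\prodset_0$, and the tie-breaker $\beta_0$ when they conflict), yielding $\prob{Q^0_j=1}=p_0\,\iyj+p_1(1-\beta_0)$. Substituting the four values of $\beta_0$ from \eqref{equ:beta_0_LHS_RHS_equal} reproduces the table, so your argument is just the paper's derivation inlined rather than cited.
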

\begin{proof}
When $\sizeof{\spotpotcands(j)} = 2$, $x_j\neq y_j$, so 
\begin{equation*}
	Q^0_j = \indd{\ranker(y_j) < \ranker(x_j)}\Big|_{x_j\in \prodset_0},
\end{equation*}
and the conclusion follows from equation \eqref{equ:expected_value_delta_j}.
\end{proof}

\begin{lemma}
	\label{lem:OjCj1_2}
For any $j$, if $\sizeof{\spotpotcands(j+1)} = 2$ and $\spotpotcands(j+1)\cap\spotpotcands(j)=\varnothing$, then the distribution of $O'_j$ is determined by the values of $\ixjp$ and $\iyjp$ as below:
\begin{center}
\begin{tabular}{ |c|c|c| }
	\hline
	$\ixjp$ & $\iyjp$ & $O'_j$  \\ 
	\hline
 0 & 0 & $0$    \\  
	\hline
 1 & 1 & $\gamma(p_0)\oplus\gamma(p_1)$    \\  
	\hline
 0 & 1 & $\gamma(p_0)$    \\  
	\hline
 1 & 0 & $\gamma(p_1)$    \\  
	\hline
\end{tabular}
\end{center}
\end{lemma}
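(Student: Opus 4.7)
The plan is to unpack the definition of $O'_j$ from equation \eqref{equ:Oj} and read off its distribution in each of the four cases by analyzing where the two items $x_{j+1}$ and $y_{j+1}$ land under the randomized treatment assignment. Under the hypothesis $\sizeof{\spotpotcands(j+1)} = 2$ and $\spotpotcands(j+1) \cap \spotpotcands(j) = \varnothing$, the set $\spotpotcands(j+1) \setminus \spotpotcands(j)$ equals $\{x_{j+1}, y_{j+1}\}$ with $x_{j+1} \neq y_{j+1}$. So the plan is to write
\begin{equation*}
O'_j = \indd{\ranker_*(x_{j+1}) < j} + \indd{\ranker_*(y_{j+1}) < j},
\end{equation*}
and then split each indicator according to the group assignment of the item, using the definition of $\ranker_*$ in equation \eqref{equ:optimal_self_contradicting_design_R}.

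Next I would handle each indicator separately. For the first term: if $x_{j+1} \in \prodset_0$ then $\ranker_*(x_{j+1}) = \ranker_0(x_{j+1}) = j+1$, so the indicator is $0$; if $x_{j+1} \in \prodset_1$ then $\ranker_*(x_{j+1}) = \ranker_1(x_{j+1})$ and I would argue that $\indd{\ranker_1(x_{j+1}) < j} = \ixjp$. The subtle step here is that $\ixjp$ is $\indd{\ranker_1(x_{j+1}) < j+1}$ by definition, so I need to verify that $\ranker_1(x_{j+1})$ cannot equal $j$. This follows from the disjointness hypothesis, which forces $x_{j+1} \neq y_j$, combined with $\ranker_1(y_j) = j$; and $\ranker_1(x_{j+1}) \neq j+1$ follows from $x_{j+1} \neq y_{j+1}$. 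A parallel argument handles $\indd{\ranker_*(y_{j+1}) < j}$, yielding $\indd{y_{j+1} \in \prodset_0} \cdot \iyjp$.

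Putting these together I get
\begin{equation*}
O'_j = \indd{x_{j+1} \in \prodset_1}\,\ixjp \;+\; \indd{y_{j+1} \in \prodset_0}\,\iyjp.
\end{equation*}
Because $x_{j+1} \neq y_{j+1}$ and each producer item is assigned to $\prodset_0$ or $\prodset_1$ independently with probabilities $p_0$ and $p_1$, the two indicators $\indd{x_{j+1} \in \prodset_1}$ and $\indd{y_{j+1} \in \prodset_0}$ are independent Bernoulli random variables with means $p_1$ and $p_0$ respectively. I would then simply enumerate the four $(\ixjp, \iyjp) \in \{0,1\}^2$ cases: $(0,0)$ gives the constant $0$; $(0,1)$ leaves only $\gamma(p_0)$; $(1,0)$ leaves only $\gamma(p_1)$; and $(1,1)$ gives the sum of the two independent Bernoullis, which is exactly $\gamma(p_0)\oplus\gamma(p_1)$ by the definition of $\oplus$. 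This exhausts the table.

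I do not expect any serious obstacle here; the only minor pitfall worth stating explicitly is the rank-boundary check described above (that $\ranker_1(x_{j+1})$ and $\ranker_0(y_{j+1})$ cannot take the values $j$ or $j+1$), which is what lets the $\indd{\cdot < j}$ events in the definition of $O'_j$ be identified with the $\indd{\cdot < j+1}$ indicators $\ixjp$ and $\iyjp$. Without the disjointness hypothesis this identification could fail, which is precisely why the lemma assumes $\spotpotcands(j+1) \cap \spotpotcands(j) = \varnothing$.
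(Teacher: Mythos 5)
Your proposal is correct and follows essentially the same route as the paper's proof: expand $O'_j$ via equation \eqref{equ:Oj}, use the disjointness hypothesis to replace the threshold $j$ by $j+1$ in the indicators, and arrive at $O'_j = \ixjp\,\indd{x_{j+1}\in\prodset_1} + \iyjp\,\indd{y_{j+1}\in\prodset_0}$ before reading off the four cases. You are in fact slightly more explicit than the paper about the boundary check ($\ranker_1(x_{j+1})\neq j$, $\ranker_0(y_{j+1})\neq j$) and the independence of the two assignment indicators, which the paper leaves implicit.
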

\begin{proof}
	By equation \eqref{equ:Oj}, 

\begin{equation*}
  \begin{split}
	O'_{j} & = \sum_{d\in\spotpotcands(j+1)\setminus\spotpotcands(j)}\indd{\ranker_*(d)<j} \\
	& = \sum_{d\in\spotpotcands(j+1)}\indd{\ranker_*(d)<j} \\
	& = \indd{\ranker_*(x_{j+1})<j} + \indd{\ranker_*(y_{j+1})<j} \\
	& = \indd{\ranker_*(x_{j+1})<j+1} + \indd{\ranker_*(y_{j+1})<j+1} (\textit{b.c. }\spotpotcands(j+1)\cap\spotpotcands(j)=\varnothing) \\
	& = \ixjp\indd{x_{j+1}\in\prodset_1} + \iyjp\indd{y_{j+1}\in\prodset_0}
  \end{split}
\end{equation*}
\end{proof}

\begin{lemma}
	\label{lem:Oj1Cj_2}
	For any $j$, if $\sizeof{\spotpotcands(j)} = 2$ and $\spotpotcands(j+1)\cap\spotpotcands(j)=\varnothing$, then the distribution of $O''_{j+1}$ is determined by the values of $\ixj$ and $\iyj$ as below:
\begin{center}
\begin{tabular}{ |c|c|c| }
	\hline
	$\ixj$ & $\iyj$ & $O''_{j+1}$  \\ 
	\hline
 0 & 0 & $\gamma(p_0)\oplus\gamma(p_1)$    \\  
	\hline
 1 & 1 & $2$    \\  
	\hline
 0 & 1 & $\gamma(p_0)\oplus \gamma(p_0) \oplus \gamma(p_1)$    \\  
	\hline
 1 & 0 & $\gamma(p_1)\oplus \gamma(p_1) \oplus \gamma(p_0)$    \\  
	\hline
\end{tabular}
\end{center}
\end{lemma}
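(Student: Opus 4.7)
The plan is to mimic the proof of Lemma~\ref{lem:OjCj1_2} applied to the ``other half'' of $\spotpotcands(j)$. First, I would start from the definition
\begin{equation*}
O''_{j+1} = \sum_{d\in \spotpotcands(j)\setminus\spotpotcands(j+1)} \indd{\ranker_*(d)<j+1}
\end{equation*}
given by equation~\eqref{equ:Ojplus1}. The hypotheses $\sizeof{\spotpotcands(j)} = 2$ and $\spotpotcands(j)\cap\spotpotcands(j+1) = \varnothing$ make the index set exactly $\{x_j, y_j\}$, so $O''_{j+1}$ splits cleanly as the sum of two indicator terms, one for $x_j$ and one for $y_j$.

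Next, I would analyze each indicator by case-splitting on the treatment assignment, using $\ranker_*(x_j) = j$ when $x_j\in\prodset_0$ and $\ranker_*(x_j) = \ranker_1(x_j)$ when $x_j\in\prodset_1$, with the symmetric statements for $y_j$. The binary flag $\ixj = \indd{\ranker_1(x_j)<j}$ carries most of the information I need about $\ranker_1(x_j)$; to determine $\indd{\ranker_1(x_j)<j+1}$ I also need to rule out $\ranker_1(x_j) = j$ and $\ranker_1(x_j) = j+1$. The first exclusion uses $x_j \neq y_j$ (from $\sizeof{\spotpotcands(j)} = 2$) combined with $\ranker_1(y_j) = j$; the second uses the disjointness of $\spotpotcands(j)$ and $\spotpotcands(j+1)$, which gives $x_j \neq y_{j+1}$ and hence $\ranker_1(x_j) \neq j+1$. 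A parallel argument handles the $y_j$ indicator via $\iyj$ and $\ranker_0$. This reduces each indicator, in each $(\ixj,\iyj)$ cell, to either a deterministic constant or a single Bernoulli in one treatment flag.

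With the indicators pinned down, the final step is to assemble the distribution of the sum. The treatment assignments of $x_j$ and $y_j$ are independent (as they are distinct producer items randomized independently into $\prodset_0$ or $\prodset_1$), so the two indicator terms are independent as well. This immediately yields the $(0,0)$ entry as $\gamma(p_0)\oplus\gamma(p_1)$ and the $(1,1)$ entry as the constant $2$. The main obstacle will be the asymmetric cells $(0,1)$ and $(1,0)$, where one indicator collapses deterministically to $1$ while the other is a single Bernoulli in $\indd{x_j\in\prodset_0}$ or $\indd{y_j\in\prodset_1}$; matching this directly to the three-term $\oplus$-convolution such as $\gamma(p_0)\oplus\gamma(p_0)\oplus\gamma(p_1)$ stated in the table is not an obvious distributional identity. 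I expect to need either a decomposition that uses $p_0+p_1=1$ to reintroduce auxiliary independent Bernoulli factors whose sum has the right law, or to reinterpret the $\oplus$-expression as a stochastic-ordering surrogate (which is what is actually needed downstream for $T(j+1)\succ T(j)$ in Theorem~\ref{thm:ntreat1_unicorn_monotonic} via Corollary~\ref{cor:directSumOfRV} and Lemma~\ref{lem:bernoulliOrder}). Reconciling the natural ``constant plus one Bernoulli'' form with the stated three-Bernoulli convolution is where the real work of the proof will sit.
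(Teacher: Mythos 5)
Your decomposition is exactly the paper's: under the two hypotheses the index set $\spotpotcands(j)\setminus\spotpotcands(j+1)$ equals $\{x_j,y_j\}$, and each indicator reduces --- after splitting on the treatment assignment and using $x_j\neq y_j$ together with the disjointness of $\spotpotcands(j)$ and $\spotpotcands(j+1)$ to rule out $\ranker_1(x_j)\in\{j,j+1\}$ (symmetrically for $y_j$) --- to
\begin{equation*}
O''_{j+1}=\indd{x_j\in\prodset_0}+\ixj\,\indd{x_j\in\prodset_1}+\indd{y_j\in\prodset_1}+\iyj\,\indd{y_j\in\prodset_0},
\end{equation*}
with the assignments of the two distinct items independent. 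That identity is precisely where the paper's own proof stops.

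The obstacle you flag in the asymmetric cells is not something you need to overcome: your ``constant plus one Bernoulli'' computation gives the correct distribution, and the stated table is in error in those two cells. For $(\ixj,\iyj)=(0,1)$ the two $y_j$-indicators are complementary, not independent, so they sum to the constant $1$ and $O''_{j+1}=1+\indd{x_j\in\prodset_0}\disteq 1\oplus\gamma(p_0)$; the stated $\gamma(p_0)\oplus\gamma(p_0)\oplus\gamma(p_1)$ has the same mean (since $2p_0+p_1=1+p_0$) but a genuinely different law (it puts positive mass on $0$ and $3$), so no identity exploiting $p_0+p_1=1$ can reconcile the two, and neither stochastically dominates the other in general. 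Symmetrically, the $(1,0)$ cell should read $1\oplus\gamma(p_1)$. Note the paper itself uses the complementarity in the $(1,1)$ row to obtain the constant $2$, so the other two rows are simply inconsistent with that. The error is harmless downstream: the corrected entries still satisfy $O''_{j+1}\succ Q^0_j$ and $O''_{j+1}\succ O'_j$ wherever Lemmas~\ref{lem:CjSize1}, \ref{lem:CjCj1Size2NonOverlap} and their companions invoke this table --- indeed more easily, since $1\oplus\gamma(p)$ dominates any single Bernoulli. So keep your derivation, state the two corrected entries, and do not reinterpret $\oplus$ as a mere stochastic-ordering surrogate.
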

\begin{proof}
	By equation \eqref{equ:Ojplus1}, 

\begin{equation*}
  \begin{split}
	O''_{j+1} & = \sum_{d\in\spotpotcands(j)\setminus\spotpotcands(j+1)}\indd{\ranker_*(d)<j+1} \\
	& = \sum_{d\in\spotpotcands(j)}\indd{\ranker_*(d)<j+1} \\
	& = \indd{\ranker_*(x_{j})<j+1} + \indd{\ranker_*(y_{j})<j+1} \\
	& = \indd{x_j\in\prodset_0} + \indd{\ranker_1(x_{j}<j+1)}\indd{x_{j}\in\prodset_1} \\ 
	& \ + \indd{y_j\in\prodset_1} + \indd{\ranker_0(y_{j}<j+1)}\indd{y_{j}\in\prodset_0} \\
	& = \indd{x_j\in\prodset_0} + \ixj\indd{x_{j}\in\prodset_1} \\ 
	& \ + \indd{y_j\in\prodset_1} + \iyj\indd{y_{j}\in\prodset_0} (\textit{b.c. }\spotpotcands(j+1)\cap\spotpotcands(j)=\varnothing)
  \end{split}
\end{equation*}
\end{proof}

\begin{lemma}
	\label{lem:CjCj1Size1}
	If $\sizeof{\spotpotcands(j)} = \sizeof{\spotpotcands(j+1)} = 1$, then $T(j+1)\succ T(j)$.
\end{lemma}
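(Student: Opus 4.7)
The plan is to observe that the hypothesis $\sizeof{\spotpotcands(j)} = \sizeof{\spotpotcands(j+1)} = 1$ forces both $T(j)$ and $T(j+1)$ to be constants (in fact $0$ and $1$ respectively), after which stochastic dominance is immediate. There is no real combinatorial work; everything collapses because the two rankers agree on what occupies spots $j$ and $j+1$, so no tie-breaking is needed and the decomposition terms degenerate.

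First I would unpack the tie-breaking terms. When $\sizeof{\spotpotcands(j)}=1$, we have $x_j = y_j$, i.e.\ $\ranker_0$ and $\ranker_1$ place the same item at position $j$, so no merging conflict arises at that spot and $Q^0_j = 0$ deterministically. By the same reasoning with $j$ replaced by $j+1$, the hypothesis gives $Q^0_{j+1}=0$. Hence $T(j) = O'_j$ and $T(j+1) = O''_{j+1}$ in distribution (since convolving with the point mass at $0$ is the identity operation).

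Next I would evaluate $O'_j$ and $O''_{j+1}$ directly from their definitions in \eqref{equ:Oj} and \eqref{equ:Ojplus1}. Because $\ranker_0$ is a bijection and $\ranker_0(x_j)=j\neq j+1=\ranker_0(x_{j+1})$, we have $x_{j+1}\neq x_j$, so $\spotpotcands(j+1)\setminus\spotpotcands(j)=\{x_{j+1}\}$ and symmetrically $\spotpotcands(j)\setminus\spotpotcands(j+1)=\{x_j\}$. Then
\[
O'_j = \indd{\ranker_*(x_{j+1})<j}, \qquad O''_{j+1} = \indd{\ranker_*(x_j)<j+1}.
\]
Since $\ranker_0(x_{j+1})=\ranker_1(x_{j+1})=j+1$, the SUTVA ranker value $\ranker_*(x_{j+1})$ equals $j+1$ regardless of which group $x_{j+1}$ falls into, so $O'_j = \indd{j+1<j} = 0$. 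Likewise $\ranker_*(x_j)=j$, so $O''_{j+1} = \indd{j<j+1} = 1$.

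Combining these pieces yields $T(j)\equiv 0$ and $T(j+1)\equiv 1$, and then $T(j+1)\succ T(j)$ follows trivially from the definition of stochastic order (since $F_{T(j+1)}(t)\le F_{T(j)}(t)$ for every $t$). The only step requiring a sliver of care is checking that $x_j\neq x_{j+1}$ and that $\ranker_*$ is well-defined on both items independent of treatment assignment; both are immediate from the bijectivity of $\ranker_0,\ranker_1$ and the fact that $x_j=y_j$, $x_{j+1}=y_{j+1}$ under the single-candidate hypothesis. I do not anticipate any real obstacle here: this lemma is essentially the degenerate base case of the monotonicity argument that sets up the more substantive $\sizeof{\spotpotcands(\cdot)}=2$ cases to follow.
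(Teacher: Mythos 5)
Your proof is correct and takes essentially the same route as the paper's: both reduce the claim to $Q^0_j = Q^0_{j+1} = 0$, $O'_j = 0$, and $O''_{j+1} = 1$, giving $T(j) \equiv 0$ and $T(j+1) \equiv 1$. You merely make explicit the evaluation of $\ranker_*(x_j)$ and $\ranker_*(x_{j+1})$ (which is independent of the treatment assignment since $x_j = y_j$ and $x_{j+1} = y_{j+1}$), a step the paper leaves implicit when citing equations \eqref{equ:Oj} and \eqref{equ:Ojplus1}.
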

\begin{proof}
	In this case $x_j = y_j$, $x_{j+1} = y_{j+1}$ while $x_j \neq x_{j+1}$. By equation \eqref{equ:Rd0DecomposeV2}, $Q^0_j = Q^0_{j+1} = 0$. By equation \eqref{equ:Oj}, $O'_{j} = 0$. By equation \eqref{equ:Ojplus1}, $O''_{j+1} = 1$. Therefore, $T(j+1) = 1$ and $T(j) = 0$.
\end{proof}

\begin{lemma} \label{lem:Cj1Size2Qprime}
For any $j$, if $\sizeof{\spotpotcands(j+1)} = 2$ and $\sizeof{\spotpotcands(j)} = 1$, then $T(j+1)\succ T(j)$.
\end{lemma}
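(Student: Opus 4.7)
The plan is to reduce the lemma to a short case analysis by first exploiting the index-mismatch constraints to eliminate the intersecting-sets subcase, then directly simplifying both $T(j)$ and $T(j+1)$.

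First, I would observe that $\spotpotcands(j) \cap \spotpotcands(j+1) = \varnothing$ automatically under the hypotheses. Write $\spotpotcands(j) = \{z_j\}$ with $z_j = x_j = y_j$; if $z_j$ also belonged to $\spotpotcands(j+1) = \{x_{j+1}, y_{j+1}\}$, then $x_j = x_{j+1}$ or $y_j = y_{j+1}$, each of which contradicts the defining relations $\ranker_0(x_k) = k$ and $\ranker_1(y_k) = k$. This disjointness is exactly what makes Lemma \ref{lem:OjCj1_2} applicable.

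Second, I would simplify the two sides. Since $\sizeof{\spotpotcands(j)} = 1$, no tie-breaking is needed at spot $j$, so $Q^0_j \equiv 0$ and $T(j) = O'_j$, whose distribution is given by Lemma \ref{lem:OjCj1_2} as a function of $(\ixjp, \iyjp)$. For $T(j+1)$, the sum defining $O''_{j+1}$ collapses to $\indd{\ranker_*(z_j) < j+1}$; since $z_j$ is $x_j$ when it lands in $\prodset_0$ and $y_j$ when it lands in $\prodset_1$, both cases give $\ranker_*(z_j) = j$, so $O''_{j+1} \equiv 1$ and $T(j+1) = 1 + Q^0_{j+1}$, with $Q^0_{j+1}$ distributed as in Lemma \ref{lem:QjCj2} at $j+1$.

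Third, I would enumerate the four cases of $(\ixjp, \iyjp) \in \{0,1\}^2$. In $(0,0)$, $(0,1)$, and $(1,0)$ the conclusion is essentially free: $T(j+1) \geq 1$ almost surely, and by inspection $T(j+1)$ is a $+1$ shift of a Bernoulli that stochastically dominates the Bernoulli (or constant $0$) that gives $T(j)$; Lemma \ref{lem:sumOfIndVariables} together with Lemma \ref{lem:bernoulliOrder} then yields the dominance. The only substantive case is $(1,1)$, where $T(j) = \gamma(p_0) \oplus \gamma(p_1)$ is supported on $\{0,1,2\}$ and $T(j+1) = 1 + \gamma(1 - p_0 p_1)$ on $\{1,2\}$. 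The CDF comparison at $t \leq 1$ and at $t \geq 3$ is automatic; the only nontrivial inequality is at $t = 2$, which reduces to $1 - p_0 p_1 \geq p_0 p_1$, i.e.\ $p_0 p_1 \leq 1/2$. This holds because $p_0 + p_1 = 1$ forces $p_0 p_1 \leq 1/4$.

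I expect the $(1,1)$ case to be the only real obstacle, since it does not reduce to chaining the preceding lemmas and requires the explicit CDF comparison above. The clean closure via $p_0 + p_1 = 1 \Rightarrow p_0 p_1 \leq 1/4$ suggests that the same constraint will be load-bearing in the analogous monotonicity subcases still to be handled.
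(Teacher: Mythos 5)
Your proof is correct and follows essentially the same route as the paper's: both reduce to $T(j)=O'_j$ and $T(j+1)=1+Q^0_{j+1}$ via the disjointness $\spotpotcands(j)\cap\spotpotcands(j+1)=\varnothing$ (which you, unlike the paper, explicitly justify), and both read off the same four-case table from Lemma \ref{lem:QjCj2} (at $j+1$) and Lemma \ref{lem:OjCj1_2}. The only cosmetic difference is in closing the $(\ixjp,\iyjp)=(1,1)$ case: the paper uses $1-p_0p_1\geq p_1$ together with component-wise dominance (Lemma \ref{lem:bernoulliOrder} and Corollary \ref{cor:directSumOfRV}), while you compare the distribution functions directly and need only $p_0p_1\leq 1/2$; both are valid.
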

\begin{proof}
	In this case $\spotpotcands(j+1)\cap\spotpotcands(j)=\varnothing$. So $Q^0_j = 0$ by equation \eqref{equ:Rd0DecomposeV2} while $O''_{j+1} = 1$ by equation \eqref{equ:Ojplus1}. Meanwhile the distribution of $Q^0_{j+1}$ and $O'_j$ can be obtained by Lemma \ref{lem:QjCj2} (substitute $j\rightarrow j+1$) and Lemma \ref{lem:OjCj1_2}. Consequently we have the distributions of $T(j+1)$ and $T(j)$ as below:

\begin{center}
\begin{tabular}{ |c|c|c|c| }
	\hline
	$\ixjp$ & $\iyjp$ & $T(j+1)$ & $T(j)$  \\ 
	\hline
 0 & 0 & $1 + \gamma(p_0p_1)$ & $0$    \\  
	\hline
 1 & 1 & $1 + \gamma(1-p_0p_1)$ & $\gamma(p_0)\oplus\gamma(p_1)$    \\  
	\hline
 0 & 1 & $1 + \gamma(p_0)$ & $\gamma(p_0)$    \\  
	\hline
 1 & 0 & $1 + \gamma(p_1)$ & $\gamma(p_1)$    \\  
	\hline
\end{tabular}
\end{center}

Note that $1 - p_1 = p_0 \geq p_0p_1$ so $1-p_0p_1\geq p_1$. The conclusion follows from Lemma \ref{lem:bernoulliOrder}.
\end{proof}

\begin{lemma}
	\label{lem:CjSize1}
	If $\sizeof{\spotpotcands(j)} = 2$ and $\sizeof{\spotpotcands(j+1)} = 1$, then $T(j+1)\succ T(j)$.
\end{lemma}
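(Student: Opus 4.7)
The plan is to reduce both $T(j)$ and $T(j+1)$ to a single nontrivial component by exploiting the structural hypotheses $\sizeof{\spotpotcands(j)}=2$ and $\sizeof{\spotpotcands(j+1)}=1$, then finish by a four-way case analysis on the deterministic indicators $\ixj$ and $\iyj$.

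First I would verify the empty-intersection condition $\spotpotcands(j)\cap\spotpotcands(j+1) = \varnothing$. Since $\sizeof{\spotpotcands(j+1)} = 1$ forces $x_{j+1}=y_{j+1}$, any common element would require a single producer item to be mapped to two distinct spots under one of the bijections $\ranker_0$ or $\ranker_1$, which is impossible. This is precisely the prerequisite for applying Lemma \ref{lem:Oj1Cj_2}.

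Next I would simplify the two pieces that do not come from the earlier lemmas. From equation \eqref{equ:Rd0DecomposeV2}, $Q^0_{j+1}$ is a sum over $\spotpotcands(j+1)\setminus\{x_{j+1}\}$, which is empty, so $Q^0_{j+1}=0$. From equation \eqref{equ:Oj}, $O'_j$ collapses to $\indd{\ranker_*(x_{j+1})<j}$, and since $x_{j+1}=y_{j+1}$ forces $\ranker_*(x_{j+1}) = j+1$ regardless of which group $x_{j+1}$ belongs to, this indicator vanishes. Hence $T(j+1) = O''_{j+1}$ and $T(j) = Q^0_j$, whose distributions are tabulated as functions of $(\ixj,\iyj)$ in Lemmas \ref{lem:Oj1Cj_2} and \ref{lem:QjCj2}.

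Finally I would enumerate the four possibilities for $(\ixj,\iyj)\in\{0,1\}^2$. Three of them are quick: $(1,1)$ gives $T(j+1)=2$ deterministically, which dominates any Bernoulli $T(j)$; in $(0,1)$ and $(1,0)$ one can write $T(j+1)=\gamma(p_k)\oplus Y$ with $Y$ an independent nonnegative sum of Bernoullis, and Lemma \ref{lem:sumOfIndVariables} applied against $T(j)=\gamma(p_k)$ closes those. The main obstacle is the subcase $(\ixj,\iyj)=(0,0)$, where I must compare $T(j+1)=\gamma(p_0)\oplus\gamma(p_1)$ with $T(j)=\gamma(p_0p_1)$, and neither side is a shift of the other. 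My approach there would be the coupling argument: for independent Bernoullis $U_0,U_1$ with means $p_0,p_1$ we have $U_0+U_1 \geq U_0 U_1$ pointwise, and by independence $U_0 U_1 \sim \gamma(p_0 p_1)$, giving the required stochastic dominance. (A direct CDF check also works and reduces to $p_0 p_1 \leq 1/2$, which follows from $p_0+p_1=1$.)
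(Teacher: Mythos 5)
Your proof is correct and follows essentially the same route as the paper: both reduce to $T(j+1)=O''_{j+1}$ and $T(j)=Q^0_j$ (using $Q^0_{j+1}=0$ and $O'_j=0$), pull the distributions from Lemmas \ref{lem:Oj1Cj_2} and \ref{lem:QjCj2}, and compare case by case on $(\ixj,\iyj)$. The only difference is that you supply details the paper dismisses as obvious, namely the verification that $\spotpotcands(j)\cap\spotpotcands(j+1)=\varnothing$ and the coupling $U_0+U_1\geq U_0U_1$ establishing $\gamma(p_0)\oplus\gamma(p_1)\succ\gamma(p_0p_1)$ in the $(0,0)$ subcase.
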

\begin{proof}
	Still, $\spotpotcands(j+1)\cap\spotpotcands(j)=\varnothing$. So $Q^0_{j+1} = 0$ by equation \eqref{equ:Rd0DecomposeV2} while $O'_{j} = 0$ by equation \eqref{equ:Oj}. Meanwhile, the distributions of $Q^0_j$ and $O''_{j+1}$ can be obtained from Lemma \ref{lem:QjCj2} and Lemma \ref{lem:Oj1Cj_2}:

\begin{center}
\begin{tabular}{ |c|c|c|c| }
	\hline
	$\ixj$ & $\iyj$ & $O''_{j+1}$ & $Q^0_j$  \\ 
	\hline
 0 & 0 & $\gamma(p_0)\oplus\gamma(p_1)$ & $\gamma(p_0p_1)$    \\  
	\hline
 1 & 1 & $2$  & $\gamma(1-p_0p_1)$   \\  
	\hline
 0 & 1 & $\gamma(p_0)\oplus \gamma(p_0) \oplus \gamma(p_1)$  & $\gamma(p_0)$  \\  
	\hline
 1 & 0 & $\gamma(p_1)\oplus \gamma(p_1) \oplus \gamma(p_0)$  & $\gamma(p_1)$   \\  
	\hline
\end{tabular}
\end{center}
Obviously $O''_{j+1}\succ Q^0_j$ so $T(j+1)\succ T(j)$.
\end{proof}

\begin{lemma}
	\label{lem:CjCj1Size2NonOverlap}
	If $\sizeof{\spotpotcands(j)} = 2$, $\sizeof{\spotpotcands(j+1)} = 2$ and $\spotpotcands(j+1)\cap\spotpotcands(j)=\varnothing$. Then $T(j+1)\succ T(j)$.
\end{lemma}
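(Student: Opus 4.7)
The plan is a direct case analysis on the four binary indicators $(\ixj, \iyj, \ixjp, \iyjp)$. By Lemmas \ref{lem:QjCj2}, \ref{lem:OjCj1_2}, and \ref{lem:Oj1Cj_2}, in the non-overlap regime the distributions of $Q^0_j$ and $O''_{j+1}$ are fully determined by $(\ixj, \iyj)$, while those of $O'_j$ and $Q^0_{j+1}$ are fully determined by $(\ixjp, \iyjp)$. Moreover, because $\spotpotcands(j)$ and $\spotpotcands(j+1)$ are disjoint and the tie-breaking randomness at the two spots is independent, the four random variables appearing in $T(j) = O'_j \oplus Q^0_j$ and $T(j+1) = O''_{j+1} \oplus Q^0_{j+1}$ are mutually independent. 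The proof thus reduces to writing out, for each of the sixteen combinations of the four indicators, the explicit Bernoulli-sum representations of the four summands and verifying $T(j+1) \succ T(j)$.

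The main tools are Lemma \ref{lem:sumOfIndVariables} and Corollary \ref{cor:directSumOfRV}, which together let me ``cancel'' a common independent Bernoulli summand on both sides, together with Lemma \ref{lem:bernoulliOrder} for ordering single Bernoullis. The only extra arithmetic needed is $p_0 p_1 \le 1/4$, which follows from $p_0 + p_1 = 1$ and AM--GM and delivers $\gamma(1 - p_0 p_1) \succ \gamma(p_0 p_1)$. My uniform strategy is: in each case, identify the $\gamma(\cdot)$ factors common to both $T(j)$ and $T(j+1)$, strip them off using Lemma \ref{lem:sumOfIndVariables}, and check dominance of the remaining pieces. In the bulk of the sixteen cases, the leftover comparison is either an obvious ``$T(j+1)$ has one extra independent Bernoulli over $T(j)$'' situation or it reduces to the single-Bernoulli inequality $\gamma(1-p_0 p_1) \succ \gamma(p_0 p_1)$.

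The main obstacle will be the tightest case $(\ixj, \iyj) = (\ixjp, \iyjp) = (1,1)$, where the naive summand-by-summand pairing fails: $Q^0_{j+1} = \gamma(1-p_0 p_1)$ does not stochastically dominate $O'_j = \gamma(p_0) \oplus \gamma(p_1)$ since the latter can attain the value $2$ while the former cannot. In this case the tables yield
\[
T(j) \;=\; \gamma(1 - p_0 p_1) \,\oplus\, \gamma(p_0) \,\oplus\, \gamma(p_1), \qquad T(j+1) \;=\; 2 \,\oplus\, \gamma(1 - p_0 p_1).
\]
Applying Lemma \ref{lem:sumOfIndVariables} to strip off the common independent $\gamma(1-p_0 p_1)$ factor reduces the claim to $2 \succ \gamma(p_0) \oplus \gamma(p_1)$. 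Since $\gamma(p_0) \oplus \gamma(p_1)$ is supported on $\{0,1,2\}$ while $2$ is the deterministic constant, the CDF inequality $\prob{2 \ge t} \ge \prob{\gamma(p_0) \oplus \gamma(p_1) \ge t}$ holds trivially at every integer $t$. This closes the last case and the lemma is proved.
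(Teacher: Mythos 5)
Your proposal is correct and follows essentially the same route as the paper: both read off the distributions of $Q^0_j$, $O'_j$, $Q^0_{j+1}$, $O''_{j+1}$ from Lemmas \ref{lem:QjCj2}, \ref{lem:OjCj1_2}, \ref{lem:Oj1Cj_2} and compare summand-by-summand via Corollary \ref{cor:directSumOfRV}, with the only non-routine step occurring when $\ixjp=\iyjp=1$. There, your move of stripping the common $\gamma(1-p_0p_1)$ and checking $2\succ\gamma(p_0)\oplus\gamma(p_1)$ is equivalent to the paper's cross-pairing $Q^0_{j+1}\succ Q^0_j$ and $O''_{j+1}\succ O'_j$, so the argument goes through in all sixteen cases exactly as you describe.
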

\begin{proof}
	In this case, by Lemma \ref{lem:QjCj2} and Lemma \ref{lem:Oj1Cj_2}, we have the distributions of $Q^0_j$ and $O''_{j+1}$ as below
\begin{center}
\begin{tabular}{ |c|c|c|c| }
	\hline
	$\ixj$ & $\iyj$ & $O''_{j+1}$ & $Q^0_j$  \\ 
	\hline
 0 & 0 & $\gamma(p_0)\oplus\gamma(p_1)$ & $\gamma(p_0p_1)$   \\  
	\hline
 1 & 1 & $2$  & $\gamma(1 - p_0p_1)$ \\  
	\hline
 0 & 1 & $\gamma(p_0)\oplus \gamma(p_0) \oplus \gamma(p_1)$ & $\gamma(p_0)$  \\  
	\hline
 1 & 0 & $\gamma(p_1)\oplus \gamma(p_1) \oplus \gamma(p_0)$ & $\gamma(p_1)$  \\  
	\hline
\end{tabular}
\end{center}
in each case,  
\begin{equation}\label{equ:OppsuccQ0j}
	O''_{j+1}\succ Q^0_j
\end{equation}

Similarly, by Lemma \ref{lem:QjCj2} (substitute $j\rightarrow j+1$) and Lemma \ref{lem:OjCj1_2}, we have the distributions of $Q^0_{j+1}$ and $O'_{j}$ as below

\begin{center}
\begin{tabular}{ |c|c|c|c| }
	\hline
	$\ixjp$ & $\iyjp$& $Q^0_{j+1}$ & $O'_j$   \\ 
	\hline
 0 & 0 & $\gamma(p_0p_1)$ & $0$   \\  
	\hline
 1 & 1 & $\gamma(1 - p_0p_1)$ & $\gamma(p_0)\oplus\gamma(p_1)$   \\  
	\hline
 0 & 1 &  $\gamma(p_0)$ & $\gamma(p_0)$  \\  
	\hline
 1 & 0 & $\gamma(p_1)$ & $\gamma(p_1)$   \\  
	\hline
\end{tabular}
\end{center}

other than the $\ixjp = 1$ and $\iyjp = 1$ case, $Q^0_{j+1}\succ O'_j$. Combining with equation \eqref{equ:OppsuccQ0j}, we know that $T(j+1)\succ T(j)$ in all these cases. 

The only thing left is to prove for the $\ixjp = 1$ and $\iyjp = 1$ case. But in this scenario, we always have $Q^0_{j+1}\succ Q^0_{j}$ and $O''_{j+1}\succ O'_j$, and $T(j+1)\succ T(j)$ follows as well.
\end{proof}

\begin{lemma}
	\label{lem:CjCj1Size2OverlapSize1}
	If $\sizeof{\spotpotcands(j)} = 2$, $\sizeof{\spotpotcands(j+1)} = 2$ and $\sizeof{\spotpotcands(j+1)\cap\spotpotcands(j)}=1$. Then $T(j+1)\succ T(j)$.
\end{lemma}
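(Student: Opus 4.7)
The plan is to mimic the structure of Lemma \ref{lem:CjCj1Size2NonOverlap} while tracking the extra constraints imposed by the shared element. Because both $\ranker_0$ and $\ranker_1$ are bijections, $x_j\neq x_{j+1}$ and $y_j\neq y_{j+1}$, so a single-element overlap must take one of two geometric forms: Sub-case A, $x_j=y_{j+1}$, or Sub-case B, $y_j=x_{j+1}$. These are mutually exclusive, because if both held then $\spotpotcands(j)=\spotpotcands(j+1)$ and the overlap would have size $2$. Denote the shared item by $c$ in each sub-case.

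In each sub-case, $c$ pins down two of the four indicator values $\ixj, \iyj, \ixjp, \iyjp$. For example, in Sub-case A, $\ixj=\indd{\ranker_1(c)<j}=\indd{j+1<j}=0$ and $\iyjp=\indd{\ranker_0(c)<j+1}=\indd{j<j+1}=1$; in Sub-case B the symmetric assignments force $\iyj=0$ and $\ixjp=1$. The two remaining indicators are still free, giving $2\times 2=4$ residual sub-cases to dispatch per geometric sub-case. The distributions of $Q^0_j$ and $Q^0_{j+1}$ then follow from Lemma \ref{lem:QjCj2} with the appropriate fixed and free values.

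Next I would compute $O'_j$ and $O''_{j+1}$ from their definitions. Since $\sizeof{\spotpotcands(j+1)\setminus\spotpotcands(j)}=\sizeof{\spotpotcands(j)\setminus\spotpotcands(j+1)}=1$, each is a single indicator. The key bookkeeping is that overlap exactly one keeps the non-shared item in either difference set away from the other counterfactual's position $j$ or $j+1$, so the required indicators $\indd{\ranker_*(d)<j}$ and $\indd{\ranker_*(d)<j+1}$ simplify to the same kinds of expressions that appear in Lemmas \ref{lem:OjCj1_2} and \ref{lem:Oj1Cj_2}, only with one summand instead of two. In Sub-case A this yields $O'_j=\ixjp\cdot\indd{x_{j+1}\in\prodset_1}$ and $O''_{j+1}=\indd{y_j\in\prodset_1}+\iyj\cdot\indd{y_j\in\prodset_0}$; Sub-case B is analogous with the roles of $x_\bullet,y_\bullet$ and $\prodset_0,\prodset_1$ interchanged.

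Finally, in each residual sub-case I would verify $T(j+1)\succ T(j)$ either by coordinate-wise stochastic dominance of the two $\oplus$-summands in $T$, using Lemma \ref{lem:bernoulliOrder} and Corollary \ref{cor:directSumOfRV}, or by a direct CDF check when such a decomposition is unavailable. The main obstacle is the single residual case per sub-case where both free indicators equal $1$: there $T(j)=\gamma(p_0)\oplus\gamma(p_1)$ while $T(j+1)=1\oplus\gamma(1-p_0p_1)$, and no componentwise domination works. For this case I would evaluate the CDFs at $t=0,1,2$; the only nontrivial inequality is $p_0p_1\leq 1-p_0p_1$ at $t=1$, which holds because $p_0+p_1=1$ forces $p_0p_1\leq 1/4$.
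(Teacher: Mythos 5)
Your proposal is correct and takes essentially the same route as the paper: it fixes the overlap geometry (WLOG $x_j=y_{j+1}$), notes that this pins $\ixj=0$ and $\iyjp=1$, derives the same expressions $O'_j=\ixjp\cdot\indd{x_{j+1}\in\prodset_1}$ and $O''_{j+1}=\indd{y_j\in\prodset_1}+\iyj\cdot\indd{y_j\in\prodset_0}$, and reads the $Q$-distributions off Lemma~\ref{lem:QjCj2}. One small correction: in the residual case $\ixjp=\iyj=1$, componentwise domination does in fact work under the cross pairing the paper uses, namely $O''_{j+1}=1\succ Q^0_j=\gamma(p_0)$ together with $Q^0_{j+1}=\gamma(1-p_0p_1)\succ O'_j=\gamma(p_1)$ (since $1-p_0p_1\geq 1-p_0=p_1$), so your direct CDF check at $t=0,1,2$, while valid, is not actually needed.
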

\begin{proof}
	Without loss of generality, suppose $x_j = y_{j+1}$ and $x_{j+1}\neq y_j$. In this case $\spotpotcands(j+1)\setminus\spotpotcands(j) = \left\{ x_{j+1} \right\}$ and $\spotpotcands(j)\setminus\spotpotcands(j+1) = \left\{ y_{j} \right\}$. By equation \eqref{equ:Oj},
	\begin{equation*}
  \begin{split}
	O'_j & = \sum_{d\in\spotpotcands(j+1)\setminus\spotpotcands(j)}\indd{\ranker_*(d)<j} \\
	& = \indd{\ranker_*(x_{j+1})<j} \\
	& = \indd{\ranker_1(x_{j+1})<j}\indd{x_{j+1}\in\prodset_1}\\
	& = \indd{\ranker_1(x_{j+1})<j+1}\indd{x_{j+1}\in\prodset_1}.
  \end{split}
	\end{equation*}
	Meanwhile with equation \eqref{equ:Rd0DecomposeV2} (substitute $j\rightarrow j+1$),
\begin{equation*}
	\begin{split}
		Q^0_{j+1} & = \sum_{d\in\spotpotcands(j+1)}\indd{\ranker(d)<\ranker(x_{j+1})}\Big|_{x_{j+1}\in\prodset_0} \\
		& = \indd{\ranker(y_{j+1}<\ranker(x_{j+1}))}\Big|_{x_{j+1}\in\prodset_0}
	\end{split}
\end{equation*}

By equation \eqref{equ:expected_value_delta_j} (substitute $j\rightarrow j+1$), together with the fact that $x_j = y_{j+1}$ so $\ranker_0(y_{j+1}) = j < j+1$ and $\iyjp = 1$, the distribution of $Q^0_{j+1}$ and $O'_j$ can be summarized as in the following table:

\begin{center}
\begin{tabular}{ |c|c|c| }
	\hline
	$\ixjp$ & $Q^0_{j+1}$ & $O'_j$   \\ 
	\hline
 1  & $\gamma(1 - p_0p_1)$ & $\gamma(p_1)$   \\  
	\hline
 0  &  $\gamma(p_0)$ & 0  \\  
	\hline
\end{tabular}
\end{center}
Consequently $Q^0_{j+1}\succ O'_j$.

Similarly, by equation \eqref{equ:Ojplus1}
\begin{equation*}
  \begin{split}
	O''_{j+1} & = \sum_{d\in\spotpotcands(j)\setminus\spotpotcands(j+1)}\indd{\ranker_*(d)<j+1} \\
	& = \indd{\ranker_*(y_j)<j+1} \\
	& = \indd{y_j\in\prodset_1} + \indd{y_j\in\prodset_0}\indd{\ranker_0(y_j)<j+1} \\
	& = \indd{y_j\in\prodset_1} + \indd{y_j\in\prodset_0}\indd{\ranker_0(y_j)<j}.
  \end{split}
\end{equation*}
Meanwhile with equation \eqref{equ:Rd0DecomposeV2},
\begin{equation*}
	\begin{split}
		Q^0_{j} & = \sum_{d\in\spotpotcands(j)}\indd{\ranker(d)<\ranker(x_{j})}\Big|_{x_{j}\in\prodset_0} \\
		& = \indd{\ranker(y_{j}<\ranker(x_{j}))}\Big|_{x_{j}\in\prodset_0}.
	\end{split}
\end{equation*}

By equation \eqref{equ:expected_value_delta_j}, together with the fact that $x_j = y_{j+1}$ so $\ranker_1(x_{j}) = j+1 > j$ and $\ixj = 0$, the distribution of $Q^0_{j}$ and $O''_{j+1}$ can be summarized as in the following table:

\begin{center}
\begin{tabular}{ |c|c|c| }
	\hline
	$\iyj$ & $O''_{j+1}$ & $Q^0_j$  \\ 
	\hline
 0 & $\gamma(p_1)$ & $\gamma(p_0p_1)$   \\  
	\hline
 1 & $1$ & $\gamma(p_0)$  \\  
	\hline
\end{tabular}
\end{center}
Therefore $O''_{j+1}\succ Q^0_j$. Combining this with $Q^0_{j+1}\succ O'_j$ we get $T(j+1)\succ T(j)$.
\end{proof}

\begin{lemma}
	\label{lem:CjCj1Size2OverlapSize2}
	If $\sizeof{\spotpotcands(j)} = 2$, $\sizeof{\spotpotcands(j+1)} = 2$ and $\sizeof{\spotpotcands(j+1)\cap\spotpotcands(j)}=2$, then $T(j+1)\succ T(j)$.
\end{lemma}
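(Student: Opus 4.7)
The plan is to exploit the rigidity imposed by $\sizeof{\spotpotcands(j+1)\cap\spotpotcands(j)}=2$ to pin down the distributions of $T(j)$ and $T(j+1)$ exactly, and then compare two explicit Bernoulli laws via Lemma \ref{lem:bernoulliOrder}.

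First I would identify the combinatorial structure. Since $\sizeof{\spotpotcands(j)}=\sizeof{\spotpotcands(j+1)}=2$ and their intersection also has size $2$, we must have $\spotpotcands(j)=\spotpotcands(j+1)$ as two-element sets. Injectivity of $\ranker_0$ rules out $x_j=x_{j+1}$ (their $\ranker_0$-values $j$ and $j+1$ would coincide), and injectivity of $\ranker_1$ rules out $y_j=y_{j+1}$. The only consistent relabeling is therefore the swap $x_j=y_{j+1}$ and $y_j=x_{j+1}$, which fixes $\ranker_0(y_j)=\ranker_0(x_{j+1})=j+1$ and $\ranker_1(x_j)=\ranker_1(y_{j+1})=j+1$, yielding $\ixj=\iyj=0$ and symmetrically $\ixjp=\iyjp=1$.

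Next I would simplify $T(j)$ and $T(j+1)$. Because $\spotpotcands(j+1)\setminus\spotpotcands(j)=\varnothing$ and $\spotpotcands(j)\setminus\spotpotcands(j+1)=\varnothing$, the defining sums for $O'_j$ in \eqref{equ:Oj} and $O''_{j+1}$ in \eqref{equ:Ojplus1} are empty, so $O'_j=O''_{j+1}=0$, and hence $T(j)$ has the same distribution as $Q^0_j$ while $T(j+1)$ has the same distribution as $Q^0_{j+1}$. Feeding the indicator values into Lemma \ref{lem:QjCj2} applied at $j$ gives $Q^0_j=\gamma(p_0p_1)$, and applied at $j+1$ (with $\ixjp=\iyjp=1$) gives $Q^0_{j+1}=\gamma(1-p_0p_1)$.

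The conclusion is then a one-line Bernoulli comparison. Since $p_0+p_1=1$, AM-GM gives $p_0p_1\leq 1/4\leq 1-p_0p_1$, so Lemma \ref{lem:bernoulliOrder} yields $\gamma(1-p_0p_1)\succ \gamma(p_0p_1)$ and thus $T(j+1)\succ T(j)$. I expect no real obstacle beyond spotting the forced swap in the first paragraph; once that combinatorial fact is in hand, the remainder is essentially a table lookup from the earlier lemmas plus a one-line inequality, and no probabilistic coupling or direct appeal to Corollary \ref{cor:directSumOfRV} is actually needed in this case because both $T(j)$ and $T(j+1)$ reduce to single Bernoullis.
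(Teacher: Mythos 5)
Your proposal is correct and follows essentially the same route as the paper's proof: the forced swap $x_j=y_{j+1}$, $y_j=x_{j+1}$, the vanishing of $O'_j$ and $O''_{j+1}$, and the reduction to comparing $\gamma(p_0p_1)$ with $\gamma(1-p_0p_1)$. You merely spell out more explicitly the injectivity argument for the swap and the AM--GM step $p_0p_1\leq 1/4\leq 1-p_0p_1$, which the paper leaves implicit.
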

\begin{proof}
	In this case $x_j = y_{j+1}$ and $y_j = x_{j+1}$. So $\spotpotcands(j+1)\setminus\spotpotcands(j) = \varnothing$ and $\spotpotcands(j)\setminus\spotpotcands(j+1) = \varnothing$. By equation \eqref{equ:Oj} and \eqref{equ:Ojplus1}, $O'_j = O''_{j+1} = 0$. Meanwhile by equation \eqref{equ:expected_value_delta_j}, $Q^0_j\disteq \gamma(p_0p_1)$ and $Q^0_{j+1}\disteq \gamma(1-p_0p_1)$ so $Q^0_{j+1}\succ Q^0_j$ and $T(j+1)\succ T(j)$.
\end{proof}

Combining Lemma \ref{lem:CjCj1Size1} to Lemma \ref{lem:CjCj1Size2OverlapSize2}, Theorem \ref{thm:ntreat1_unicorn_monotonic} is proved.

\newpage
\bibliographystyle{ims}
\bibliography{reference}
%\appendix

%\input{appendix}

\end{document}